\newtheorem{theorem}{Theorem}
\newtheorem{proposition}[theorem]{Proposition}
\newtheorem{remark}[theorem]{Remark}
\newtheorem{lemma}[theorem]{Lemma}
\newcommand\1{{\mathds 1}}
\def\C{{\mathbb C}}
\def\N{{\mathbb N}}
\def\NN{{\mathbb N}}
\def\R{{\mathbb R}}
\def\RR{{\mathbb R}}
\def\bA{{\mathbf A}}
\def\ba {{ \mathbf a}}
\def\bB{{\mathbf B}}
\def\bb{{\mathbf b}}
\def\bj{{\mathbf j}}
\def\bk{{\mathbf k}}
\def\bm{{\mathbf m}}
\def\bp{{\mathbf p}}
\def\bR{{\mathbf R}}
\def\bx{{\mathbf x}}
\def\by{{\mathbf y}}
\def\bsigma{{\mathbf \sigma}}
\def\bnull{{\mathbf 0}}
\def\rd{{\mathrm{d}}}
\def\re{{\mathrm{e}}}
\def\ri{{\mathrm{i}}}
\def\cB{{\mathcal B}}
\def\cE{{\mathcal E}}
\def\cF{{\mathcal F}}
\def\cH{{\mathcal H}}
\def\cI{{\mathcal I}}
\def\cP{{\mathcal P}}
\def\cR{{\mathcal R}}
\def\cS{{\mathcal S}}
\def\cZ{{\mathcal Z}}
\def\fS{{\mathfrak S}}
\newcommand{\Tr}{{\rm Tr} \,}
\newcommand{\VTr}{\underline{\rm Tr}}
\newcommand{\tr}{{\rm tr} \,}
\newcommand{\spinup}{\uparrow}
\newcommand{\spindown}{\downarrow}
\newcommand{\Lat}{\cR}
\newcommand{\RLat}{{\cR^*}}
\newcommand{\WS}{\Gamma}
\newcommand{\per}{{\rm per}}
\newcommand{\loc}{{\rm loc}}
\newcommand{\ie}{{\em i.e. }}
\renewcommand{\epsilon}{\varepsilon}
\newcommand{\LT}{{\rm LT}}
\newcommand{\norm}[1]{\left\| #1\right\|}
\newcommand{\set}[1]{\left\{ #1\right\}}
\newcommand{\bra}[1]{\left( #1\right)}
\newcommand{\av}[1]{\left| #1\right|}
\renewcommand{\phi}{\varphi}
\newcommand\uup{{\uparrow}}
\newcommand\down{{\downarrow}}
\newcommand\rHF{{\rm rHF}}
\let\Im\relax \DeclareMathOperator*\Im{Im}
\let\div\relax \DeclareMathOperator*\div{div}
\DeclareMathOperator*\curl{\bf curl}
\DeclareMathOperator*\Ker{Ker}
\DeclareMathOperator*\Ran{Ran}
\title[The rHF model with self-generated magnetic fields]{The reduced Hartree-Fock model with self-generated magnetic fields}
\author{David Gontier}
\address{CEREMADE, University of Paris-Dauphine, PSL University, 75016 Paris, France}
\author{Salma Lahbabi}
\address{ESSM, LRI, ENSEM, University Hassan II Casablanca, Route d'El Jadida, km 7, BP:8118, Oasis, Casablanca, Maroc}
\date{\today}
\begin{document}

\maketitle

\begin{abstract}
We study the well-posedness of the reduced Hartree-Fock model for molecules and perfect crystals when taking into account a self-generated magnetic field. We exhibit a critical value $\alpha_c > 0$ such that, if the fine structure constant $\alpha$ is smaller than $\alpha_c$, then the corresponding system is stable, whereas if $\alpha$ is greater than $\alpha_c$, it is unstable. We give an explicit characterisation of $\alpha_c$ as a minimisation problem over the set of zero-modes, and we prove that the critical values for the molecular case and the periodic case coincide. Finally, we prove the existence of minimisers when the system is neutral or positively charged.
\end{abstract}



\section{Introduction}

The reduced Hartree-Fock (rHF) model is a non-linear model introduced in~\cite{hartree1928wave,solovej1991proof} to describe molecular systems, as well as periodic perfect crystals~\cite{Catto2001,Catto1998_book}. This model is an approximation of the many-body Schrödinger model. It is a first step in a ladder of better approximations, but can still provide interesting physical features that allow to study infinite systems, such as crystals with local defects~\cite{Cances2008,CanEhr} and disordered systems~\cite{CaLaLe-12,La-13,BLBL2007,Blanc2003}.

In this work, we investigate the rHF model, when we include a self-generated magnetic field. This model was recently introduced in~\cite{carlos} for molecular systems. The present article aims at extending the results of~\cite{carlos} for both the molecular case, and the periodic one.

Self-generated magnetic fields were first studied in the Schrödinger model by Fröhlich, Lieb and Loss for the one-electron atom~\cite{FLL1}, and by Lieb and Loss for the many-electron atom and one-electron molecule~\cite{LL2} (see also~\cite{LY3} and more recent works~\cite{erdos2010ground, erdos2012scott, erdos2012relativistic, erdos2012second, erdos2013stability}).
In~\cite{FLL1}, the authors proved the existence of a critical value $\alpha_c > 0$ such that if the fine structure constant $\alpha$ is smaller than $\alpha_c$, then the corresponding system is stable while if $\alpha$ is greater than $\alpha_c$, then it is unstable (the energy is not bounded from below). This instability is caused by the so-called {\em zero modes}, which are non trivial pairs of functions $(\psi, \bA)$ solving
\[
    \bsigma \cdot \bra{-\ri\nabla+\bA}\psi= \bnull,
\]
where $\bsigma := (\sigma_x, \sigma_y, \sigma_z)$ are the usual Pauli matrices. The existence of such zero modes was first proved by Loss and Yau in~\cite{LY3}, and are now rather well-understood~\cite{erdos2001kernel}. 
It implies, using a scaling argument, that for $\alpha$ large enough, the energy is not bounded from below. In the case of one electron atoms~\cite{FLL1}, the authors characterised the critical value $\alpha_c$, above which instability occurs, as a minimisation problem over the set of zero-modes. No such result was given for other systems.


In the rHF model, the state of the electrons is described by a one body-density matrix $\gamma$, which can be seen as the projection operator on the occupied orbitals. For a system with $N$-electrons, $\gamma$ is usually a rank-$N$ projector. We therefore expect instability whenever $\gamma$ ``contains'' one  or more zero-modes. In this article, we prove results similar to~\cite{FLL1}, but where $\alpha_c$ is now characterised by a minimisation problem on a set of zeros modes $(\gamma, \bA)$ satisfying
\[
    \Ran \gamma \subset \Ker\bra{ \bsigma \cdot (-\ri \nabla + \bA)}, \quad \Tr \bra{\gamma} \le N.
\]
The inequality $\Tr (\gamma) \le N$ shows that only a fraction of $\gamma$ can lead to instability, a phenomenon which is classical in concentration-compactness arguments~\cite{Lions-84, Lions-84b}.


Our main result is that for a fixed maximal nuclear charge $z$, and a fixed number of electrons $N$ (the total number of electrons in the molecular case, and the number of electrons per unit cell in the periodic case), there is a critical value $\alpha_c(z,N) > 0$ such that for $\alpha <\alpha_c(z,N)$, the rHF problem is stable, in the sense that the energy is bounded from below, while for $\alpha > \alpha_c(z,N)$, the rHF energy is not bounded from below. We prove that the critical value $\alpha_c(z,N)$ is the same for molecules and for perfect crystals. This comes from the fact that instability is caused electrons that concentrate near one nucleus. By rescaling, we end up in both situations with the same functional to study.
In the case $\alpha < \alpha_c(z,N)$, we also prove that the problem is well-posed (\ie admits minimisers) whenever the system is neutral or positively charged. Our arguments follow the ones of~\cite{carlos}.


The article is organised as follows. In Section~\ref{sec:models_and_mainResults}, we recall the rHF models for finite and periodic systems when the self-generated magnetic field is included in the model, and we state our results. We prove the stability and instability results for the molecular case in Section~\ref{sec:proof_stablity}, and in the periodic case in Section~\ref{sec:proof:(in)stablity_per}. The properties of $\alpha_c(z,N)$ are studied in Section~\ref{sec:proof:betac}. Finally, in Section~\ref{sec:proof_existence}, we prove the existence of minimisers when $\alpha < \alpha_c(z,N)$ and the system is neutral or positively charged.

\subsubsection*{Acknowledgments.} The authors warmly thank Mathieu Lewin for stimulating discussion and help. This project has received funding from the ERC-MDFT (No 725528 of M.L.), and from PEPS-INSMI 2018 of D.G.


\section{Notation and main results}
\label{sec:models_and_mainResults}

We present in this section the rHF model with self-generated magnetic field, for both finite systems and periodic systems, and we state our results. 

\subsection{The magnetic rHF model for finite systems}

The rHF model with self-generated magnetic fields for molecular systems was  recently introduced  in~\cite{carlos}. It is a natural extension of the rHF model with no magnetic fields~\cite{hartree1928wave, solovej1991proof}. In these models, electronic systems are described by one-body density matrices
\[
    \gamma \in \cP := \left\{ \gamma \in \cS(L^2(\R^3, \C^2)), \ 0 \le \gamma \le 1 \right\},
\]
where $\cS(\cH)$ is the space of bounded self-adjoint operators acting on the Hilbert space $\cH$. Here, the Hilbert space $L^2(\R^3, \C^2)$ includes the spin degree of freedom. Such operators can be decomposed as a $2 \times 2$ matrix of the form $\gamma = \begin{pmatrix}
\gamma^{\spinup \spinup} & \gamma^{\spinup \spindown} \\ \gamma^{\spindown \spinup} & \gamma^{\spindown \spindown}
\end{pmatrix}.
$
For $N$ electrons, we have $\gamma \in \cP^N$, where
\[
\cP^N := \left\{ \gamma \in \cP, \ \Tr(\gamma) = N \right\}.
\]
For a state $\gamma \in \cP$, we denote by $\rho_\gamma(\bx)=\gamma^{\uup\uup}(\bx,\bx)+ \gamma^{\down\down}(\bx,\bx)$ its density, by $\bj_\gamma(\bx) := (\bp\gamma+\gamma\bp)(\bx, \bx)$ its current, and by $\bm_\gamma (\bx) := {\rm Tr}_{\C^2} \bra{\bsigma \cdot \gamma(\bx, \bx)}$ its magnetisation. 

To take into account magnetic fields, we follow~\cite{FLL1}, and introduce
\[
 	H^1_{\rm div}:= \left\{ \bA\in L^6(\RR^3, \RR^3),\ \bB := \curl \bA\in  L^2(\RR^3, \RR^3), \ \div \bA=0 \right\}.
\]
The condition $\div \bA=0$ is called the {\em Coulomb gauge}.  In this gauge, every $\bA\in H^1_{\div}$ satisfies $ \nabla \bA \in L^2(\R^3, \R^3)$, and
\[
 	\int_{\R^3} \bB^2 = \norm{\nabla \bA}_{L^2}^2.
\]
The total energy of $(\gamma, \bA) \in \cP^N \times H^1_{\div} $ in the rHF approximation is given by
\begin{equation} 
\label{eq:def:ErHF}
    \cE_\alpha(\gamma, \bA) := \frac12 \Tr \left( \left[ \bsigma \cdot ( \bp + \bA) \right]^2 \gamma \right) + \int_{\R^3} V \rho_{\gamma} + \frac12 D(\rho_\gamma, \rho_\gamma) + \frac{1}{8 \pi \alpha^2} \int_{\R^3} \bB^2.
\end{equation}
The first term is the Pauli kinetic energy in the presence of the magnetic field $\bA$. We denoted by $\bp := - \ri \nabla$ the momentum operator, and by $\bsigma = (\sigma_x, \sigma_y, \sigma_z)$ the Pauli matrices, defined by
\begin{equation*}
 \sigma_x=\begin{pmatrix}
0 & 1\\1 & 0
 \end{pmatrix}
    ,\quad 
\sigma_y=\begin{pmatrix}
0 & -\ri\\ \ri & 0
\end{pmatrix}
,\quad
\sigma_z=\begin{pmatrix}
1 & 0\\0 & -1
\end{pmatrix}.
\end{equation*}
Here and thereafter, $\Tr \left( \left[ \bsigma \cdot ( \bp + \bA) \right]^2 \gamma \right)$ is a short-hand notation for $$\Tr \left( \left[ \bsigma \cdot ( \bp + \bA) \right] \gamma \left[ \bsigma \cdot ( \bp + \bA) \right] \right).$$ The second term describes the interaction of the electrons with an external potential~$V$. In this article, we consider the case where $V$ is the Coulomb potential generated by a set of $M$ nuclei of charges $\{ z_j \}_{1 \le j \le M} \in (\R^+)^M$ located at {\em fixed} locations $\{ \bR_j \}_{1 \le j \le M} \in (\R^3)^M$, that is
\begin{equation} 
\label{eq:def:V}
V(\bx) := - \sum_{j=1}^M \frac{z_j}{| \bR_j-\bx |}.
\end{equation}
The third term is the Hartree energy, defined by the quadratic form
\[
D(\rho_1, \rho_2) := \iint_{(\R^3)^2}\dfrac{\rho_1(\bx) \rho_2(\by)}{| \bx - \by |} \rd \bx \rd \by.
\]
The last term is the energy of the magnetic field $\bB$. The constant $\alpha$ is the fine structure constant. In our system of unit where $e = 1$, $\hbar = 1$ and $m = 1$, we have $\alpha = e^2 / \hbar c = 1/c$, where $c$ is the speed of light. Although its true physical value is $\alpha \approx \frac{1}{137}$, we keep $\alpha$ as a variable, and study the models for different values of $\alpha$.

The rHF energy with self-generated magnetic field is defined as
\begin{equation} \label{eq:def:Ialpha}
 	  \boxed{ I(\alpha,N) := \inf \left\{ \cE_\alpha(\gamma, \bA), \ \gamma \in \cP^N, \ \bA \in H^1_{\div} \right\}.}
\end{equation}

\subsection{Main results in the finite systems case} 
\label{ssec:main-results_finiteCase}
As first proved by Fröhlich, Lieb, Loss and Yau in a series of papers~\cite{FLL1, LL2,LY3}, the stability of atoms and molecules with a self-generated magnetic field fails because of the existence of zero-modes. 
In the rHF model, we denote the set of zero-modes by
\[
\cZ := \left\{ (\gamma, \bA) \in  \cP \times H^1_{\div}, \quad \Ran \gamma \subset \Ker \left( \bsigma \cdot (\bp + \bA) \right)  \right\}.
\]
The critical value $\alpha_c(z,N)$, above which the energy is no longer bounded from below, is given by a minimisation problem over the set of zero-modes $\cZ$. We introduce
\begin{equation} \label{eq:def:betazN}
\boxed{ \beta(z,N) := \inf \left\{ \frac12 D(\rho_\gamma, \rho_\gamma) - z \int_{\R^3} \dfrac{\rho_\gamma(\bx)}{| \bx |}, \quad
    (\gamma, \bA) \in \cZ, \ \int_{\R^3} \bB^2 = 1, \ \Tr(\gamma) \le N \right\}.}
\end{equation}
The properties of $\beta(z,N)$ are given in Proposition~\ref{prop:betac} below. In particular, $\beta(z,N)$ is negative and we set 
\begin{equation} \label{eq:def:alphac}
\alpha_c(z, N) := \left( \frac{-1}{8 \pi \beta(z,N)} \right)^{1/2}.
\end{equation}

Our first main result shows that~\eqref{eq:def:Ialpha} is bounded from below if $\alpha < \alpha_c(z,N)$, and is not bounded from below if $\alpha > \alpha_c(z,N)$. The proof is presented in Section~\ref{sec:proof_stablity}. 

\begin{theorem}[Stability and instability in the finite case]
    \label{th:(in)stability}
    Let $V$ be a molecular Coulomb potential of the form~\eqref{eq:def:V}, and let $z := \max_{1 \le j \le M} \{ z_j \}$ be the maximal nuclear charge in the system.
    \begin{itemize}
        \item For all $0 \leq \alpha <  \alpha_{c}(z,N)$ the energy $\cE_\alpha$ defined in~\eqref{eq:def:ErHF} is bounded from below on $\cP^N \times H^1_{\div}$. The map $\alpha \mapsto I(\alpha^{-2},N)$ is continuous, concave and non-decreasing on $[0, \alpha_c) $. 
        \item For all $\alpha > \alpha_{c}(z,N)$, the energy $\cE_\alpha$ is not bounded from below on $\cP^N \times H^1_{\div}$.
    \end{itemize}
\end{theorem}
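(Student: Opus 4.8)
The argument splits naturally into the instability regime and the stability regime, and both rest on a scaling analysis that reduces everything to the functional defining $\beta(z,N)$.

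*Instability for $\alpha > \alpha_c(z,N)$.* The plan is to build a test sequence out of an (approximate) minimiser $(\gamma_0,\bA_0)\in\cZ$ for $\beta(z,N)$ with $\int \bB_0^2 = 1$ and $\Tr(\gamma_0)\le N$. The point is that since $\Ran\gamma_0\subset\Ker(\bsigma\cdot(\bp+\bA_0))$, the Pauli kinetic term $\tfrac12\Tr([\bsigma\cdot(\bp+\bA_0)]^2\gamma_0)$ vanishes identically, so $\cE_\alpha(\gamma_0,\bA_0) = \int V\rho_{\gamma_0} + \tfrac12 D(\rho_{\gamma_0},\rho_{\gamma_0}) + \tfrac{1}{8\pi\alpha^2}$. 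Now rescale: set $\gamma_0^\lambda(\bx,\by) := \lambda^3\gamma_0(\lambda\bx,\lambda\by)$ and $\bA_0^\lambda(\bx):=\lambda\bA_0(\lambda\bx)$. One checks that the zero-mode property is preserved, $\Tr(\gamma_0^\lambda)=\Tr(\gamma_0)$, $\rho_{\gamma_0^\lambda}(\bx)=\lambda^3\rho_{\gamma_0}(\lambda\bx)$, and that $D(\rho_{\gamma_0^\lambda},\rho_{\gamma_0^\lambda}) = \lambda\, D(\rho_{\gamma_0},\rho_{\gamma_0})$, $\int \rho_{\gamma_0^\lambda}(\bx)|\bx|^{-1}\rd\bx = \lambda\int\rho_{\gamma_0}(\bx)|\bx|^{-1}\rd\bx$, while $\int(\bB_0^\lambda)^2 = \lambda\int\bB_0^2 = \lambda$. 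Since $V(\bx)\le -z/|\bx-\bR_{j_0}|$ near the nucleus $\bR_{j_0}$ realising the maximal charge $z$, after translating $\gamma_0^\lambda$ to be centred at $\bR_{j_0}$ we get, up to a lower-order error from the other nuclei (which is $O(1)$ in $\lambda$ since $\rho_{\gamma_0^\lambda}$ concentrates),
\[
\cE_\alpha(\gamma_0^\lambda,\bA_0^\lambda) \le \lambda\left( \tfrac12 D(\rho_{\gamma_0},\rho_{\gamma_0}) - z\int\frac{\rho_{\gamma_0}(\bx)}{|\bx|}\rd\bx \right) + \frac{\lambda}{8\pi\alpha^2} + O(1) = \lambda\left(\beta(z,N) + \frac{1}{8\pi\alpha^2}\right) + o(\lambda).
\]
For $\alpha > \alpha_c(z,N)$ we have $\beta(z,N) + \tfrac{1}{8\pi\alpha^2} < 0$ by the definition \eqref{eq:def:alphac}, so letting $\lambda\to+\infty$ drives the energy to $-\infty$. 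A minor technical point is that one should work with a near-minimiser of $\beta$ rather than assume a minimiser exists, which only costs an $\epsilon$ in the bracket and is absorbed since the bracket is strictly negative.

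*Stability for $\alpha < \alpha_c(z,N)$.* Here the plan is to show $\cE_\alpha(\gamma,\bA) \ge -C$ for all $(\gamma,\bA)\in\cP^N\times H^1_{\div}$. The natural route, following \cite{carlos, FLL1}, is to bound the Coulomb attraction $\int V\rho_\gamma$ below in terms of the Pauli kinetic energy and the field energy: one splits near each nucleus, uses the diamagnetic/Pauli inequality to control $\int \rho_\gamma|\bx-\bR_j|^{-1}$ by $\epsilon\,\Tr([\bsigma\cdot(\bp+\bA)]^2\gamma) + C_\epsilon\int\bB^2 + C_\epsilon N$, and the constant $\epsilon$ needed is governed precisely by $\beta(z,N)$. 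The cleanest formulation: one shows that for any $(\gamma,\bA)$ the "kinetic + attraction + field" part is bounded below by a multiple of $(1 - \alpha^2/\alpha_c^2)$ times the kinetic plus field energy, minus $O(N)$, by comparing with the variational problem \eqref{eq:def:betazN} after a localisation that isolates the concentration near a single nucleus. The Hartree term $\tfrac12 D(\rho_\gamma,\rho_\gamma)\ge 0$ is discarded. When $\alpha<\alpha_c$ the prefactor is positive, giving a lower bound. I would invoke Proposition \ref{prop:betac} for the fact that $\beta(z,N)<0$ is finite, ensuring $\alpha_c\in(0,\infty)$, and for monotonicity in $z$ to pass from the single-nucleus reduced problem back to the molecular $V$.

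*Continuity, concavity, monotonicity on $[0,\alpha_c)$.* Writing $g(\mu):=I(\mu,N)$ with $\mu=\alpha^{-2}$, observe that for fixed $(\gamma,\bA)$ the map $\mu\mapsto\cE_{\mu^{-1/2}}(\gamma,\bA)$ is affine in $\mu$ (only the field term $\tfrac{\mu}{8\pi}\int\bB^2$ depends on $\mu$, linearly), hence $g$, being an infimum of affine functions, is concave; it is non-increasing in $\mu$ since the coefficient $\tfrac{1}{8\pi}\int\bB^2\ge0$, so $\alpha\mapsto I(\alpha^{-2},N)$ is non-decreasing in $\alpha$. Concavity on the open interval plus finiteness (from the stability bound) yields continuity on $(0,\alpha_c)$ automatically; continuity at $\alpha=0$ (equivalently $\mu\to+\infty$) needs a separate short argument, e.g. comparing with the $\bA=0$ problem which is finite and gives an upper bound, together with monotonicity, to pin down the limit.

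*Main obstacle.* The delicate step is the stability direction, specifically making the localisation-near-one-nucleus argument rigorous: one must show that the "worst case" for the energy is exactly a sequence concentrating at a single nucleus in a zero-mode configuration, so that the sharp constant is $\beta(z,N)$ and not something strictly smaller. This requires a careful partition of unity adapted to the nuclear positions, the IMS-type localisation formula for the Pauli operator (controlling the localisation error by $\int\bB^2$ and $\Tr([\bsigma\cdot(\bp+\bA)]^2\gamma)$), and the binding-type inequality showing several concentration points cannot do better than one — this is where the hypothesis that $z$ is the \emph{maximal} single charge enters, and where I expect the bulk of the technical work to lie.
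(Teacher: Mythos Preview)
Your instability argument and the concavity/monotonicity discussion are essentially the paper's Step~1 and Step~2: rescale a near-minimiser of $\beta(z,N)$, concentrate it at the nucleus of maximal charge, and use that the Pauli kinetic term vanishes on zero-modes so only the $O(\lambda)$ terms survive. (The paper localises with cut-offs rather than estimating the far nuclei directly, but this is cosmetic.) There is a small slip in your monotonicity paragraph: the infimum of affine functions with nonnegative slope is non-\emph{decreasing} in $\mu=\alpha^{-2}$, not non-increasing; your final conclusion is correct but the intermediate sentence has the sign backwards.

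The genuine gap is in the stability direction. You propose to prove a direct lower bound of the form $\cE_\alpha(\gamma,\bA)\ge c(1-\alpha^2/\alpha_c^2)\big(\text{kinetic}+\text{field}\big)-O(N)$ by ``comparing with the variational problem~\eqref{eq:def:betazN}''. But $\beta(z,N)$ is a minimum over \emph{zero-modes} only, whereas your inequality must hold for all $(\gamma,\bA)$; you never say how to reduce an arbitrary state to a zero-mode, and this is the crux. Moreover, discarding the Hartree term as you suggest cannot give the sharp constant, since $\tfrac12 D(\rho,\rho)$ appears in the very definition of $\beta(z,N)$: dropping it yields a more negative infimum, hence a strictly smaller threshold than $\alpha_c(z,N)$.

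The paper does not attempt a direct lower bound at all. It argues by contrapositive: define $\widetilde{\alpha_c}$ as the actual threshold where $I(\alpha)=-\infty$, take $\alpha>\widetilde{\alpha_c}$ and a sequence $(\gamma_n,\bA_n)$ with $\cE_\alpha\to-\infty$, localise near each nucleus via IMS, and then \emph{rescale by} $\lambda_n:=\sum_j z_j\int\rho_{j,n}|\bx|^{-1}\to\infty$. Under this rescaling the kinetic term picks up an extra factor $\lambda_n$ relative to the other terms, so the bound forces $\Tr\big([\bsigma\cdot(\bp+\bA_{j,n})]^2\gamma_{j,n}\big)\to 0$. A compactness lemma (weak limits in $\fS_1,\fS_2$, strong $L^p_{\loc}$ for $\rho$ and $\bA$) then produces genuine zero-modes $(\gamma_j^*,\bA_j^*)$, at least one of which is nontrivial by the normalisation; the surviving inequality at the limit reads $\tfrac12 D(\rho_j^*,\rho_j^*)-z_j\int\rho_j^*|\bx|^{-1}+\tfrac{1}{8\pi\alpha^2}\int|\bB_j^*|^2\le 0$, which after rescaling to $\|\bB_j^*\|_{L^2}=1$ gives $\alpha\ge\alpha_c(z,N)$. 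The mechanism that singles out zero-modes is thus the $\lambda^2$-versus-$\lambda$ scaling combined with weak compactness, not a pointwise inequality; this is the missing idea in your outline.
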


\begin{remark}
    The only negative term in~\eqref{eq:def:ErHF} is the term $\int V \rho$. If we consider smeared nuclei, where $V=\mu*\frac{1}{\av{x}} \in L^\infty$, then the energy is always bounded from below (by $\| V \|_\infty N$ for instance), for all value of $\alpha > 0$.
\end{remark}

When $\alpha < \alpha_c(z,N)$, the energy~\eqref{eq:def:ErHF} is bounded from below. We now examine whether this problem has minimisers. The proof of the following result can be found in Section~\ref{sec:proof_existence}. It follows the one of~\cite{carlos}. 

\begin{theorem}[Existence of minimisers in the finite case]
    \label{th:existence_minimiser}
    If $\alpha <\alpha_c(z,N)$ and $N \le Z := \sum_{j=1}^M z_j$, then the problem~\eqref{eq:def:Ialpha} admits a minimiser. If $(\gamma_\alpha, \bA_\alpha) \in \cP^N \times H^1_{\div}$ is such a minimiser, then it satisfies the Euler-Lagrange equations
    \begin{equation} \label{eq:Euler-Lagrange}
    \begin{cases}
    \displaystyle \gamma_\alpha = \1\bra{ H_{\rho_\alpha, \bA_\alpha}\leq \varepsilon_F}+\delta, \\
    \displaystyle
    H_{\rho_\alpha, \bA_\alpha} := \frac{1}{2}\left[ \bsigma \cdot (\bp+\bA_\alpha) \right]^2 + V + \rho_\alpha * | \bx |^{-1}, \\
    \displaystyle \frac12 \left( \bj_\alpha + \curl \bm_\alpha \right) + \bA_\alpha \rho_\alpha + \dfrac{1}{4 \pi \alpha^2}(-\Delta \bA_\alpha) = \bnull,
    \end{cases}
    \end{equation}
    where $\varepsilon_F \in \R$, called the {\em Fermi energy}, is chosen so that $\Tr(\gamma_\alpha) = N$, and $\delta$ is an operator satisfying $0\leq \delta \leq \1 \bra{ H_{\rho_\alpha, \bA_\alpha}= \varepsilon_F}$. \\
    In addition, $\Ran \gamma_\alpha \subset H^2(\R^3, \C^2)$, and $\bA_\alpha \in W^{2,6}(\R^3, \R^3)$.
\end{theorem}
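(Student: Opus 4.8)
\emph{Proof proposal.}
We use the direct method. Let $(\gamma_n,\bA_n)_n\subset\cP^N\times H^1_{\div}$ be a minimising sequence for $I(\alpha,N)$. The first task is a priori bounds. Since $\alpha<\alpha_c(z,N)$, the stability mechanism behind Theorem~\ref{th:(in)stability} is strictly subcritical, so the proof of that theorem should in fact yield a coercive lower bound, of the form
\[
 \cE_\alpha(\gamma,\bA)\ \ge\ c\Big(\Tr\big((\bp+\bA)^2\gamma\big)+\int_{\R^3}\bB^2\Big)-C,
\]
with $c>0$, $C\ge0$ depending on $z,N$ and $\alpha_c(z,N)-\alpha$: the only obstruction to such an estimate is the concentration of charge onto a zero-mode near the most charged nucleus, and by the very definitions~\eqref{eq:def:betazN}--\eqref{eq:def:alphac} this costs a strictly positive amount of energy when $\alpha<\alpha_c(z,N)$. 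Granting this, $(\gamma_n)_n$ has bounded kinetic energy — hence, by the diamagnetic inequality together with the Lieb--Thirring inequality, $(\rho_{\gamma_n})_n$ is bounded in $L^1\cap L^{5/3}$ and $(\sqrt{\rho_{\gamma_n}})_n$ in $H^1$ — and $(\bA_n)_n$ is bounded in $H^1_{\div}$.

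After extracting a subsequence, we obtain $\bA_n\rightharpoonup\bA_\star$ in $H^1_{\div}$ (so $\bB_n\rightharpoonup\bB_\star$ in $L^2$ and, by Rellich, $\bA_n\to\bA_\star$ strongly in $L^p_{\loc}$ for $p<6$) and $\gamma_n\rightharpoonup\gamma_\star$ for the weak-$*$ topology of density matrices, with $0\le\gamma_\star\le1$, $N_\star:=\Tr(\gamma_\star)\le N$ and $\rho_{\gamma_n}\rightharpoonup\rho_{\gamma_\star}$ weakly in $L^{5/3}$ and a.e. The crucial estimate is
\[
 \cE_\alpha(\gamma_\star,\bA_\star)\ \le\ \liminf_{n\to\infty}\cE_\alpha(\gamma_n,\bA_n)=I(\alpha,N).
\]
The magnetic-field and Hartree terms are weakly lower semicontinuous (both convex); the external term is controlled by cutting at radius $R$ and letting $R\to\infty$, using that $V\in L^{5/2}(B_R)+L^\infty$ near the nuclei while $|V|\le C/R$ on $B_R^c$, so that any charge escaping to infinity contributes a nonnegative amount; and the magnetic kinetic energy is lower semicontinuous under this joint convergence because the cross terms involving $\bA_n$ are controlled by its local strong convergence while the charge carried off to infinity has nonnegative kinetic energy. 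We follow~\cite{carlos} for this last step. Consequently $(\gamma_\star,\bA_\star)$ minimises the relaxed problem $\inf\{\cE_\alpha(\gamma,\bA):\ \gamma\in\cP,\ \Tr\gamma\le N,\ \bA\in H^1_{\div}\}$.

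It remains to prove $N_\star=N$; this is where $N\le Z$ enters and is the main obstacle. Write the Euler--Lagrange relations for the relaxed minimiser (varying $\bA\in H^1_{\div}$, and varying $\gamma$ inside $\{0\le\gamma\le1\}$, where the linearised energy is $\Tr(H_{\rho_\star,\bA_\star}\,\cdot\,)$). If $N_\star<N$ the trace constraint is inactive, so $\gamma_\star=\1(H_{\rho_\star,\bA_\star}\le0)+\delta$ with $H_{\rho_\star,\bA_\star}=\tfrac12[\bsigma\cdot(\bp+\bA_\star)]^2+V+\rho_\star*|\bx|^{-1}$ and $0\le\delta\le\1(H_{\rho_\star,\bA_\star}=0)$. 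But then, since $N_\star<N\le Z$, the effective potential $V+\rho_\star*|\bx|^{-1}$ has the long-range attractive tail $-(Z-N_\star)|\bx|^{-1}+o(|\bx|^{-1})$ whereas $\bA_\star$ and $\bB_\star$ decay, so $H_{\rho_\star,\bA_\star}$ has infinitely many eigenvalues below $0$ and $\Tr(\gamma_\star)=\infty$, a contradiction. Hence $N_\star=N$, $(\gamma_\star,\bA_\star)\in\cP^N\times H^1_{\div}$ is a genuine minimiser of $I(\alpha,N)$, and redoing the variational computation on the constrained problem — in particular expanding $[\bsigma\cdot(\bp+\bA)]^2=(\bp+\bA)^2+\bsigma\cdot\bB$ and differentiating in $\bA$, which produces $\tfrac12(\bj_{\gamma}+\curl\bm_{\gamma})+\bA\rho_{\gamma}$ — yields the system~\eqref{eq:Euler-Lagrange} with a Fermi level $\varepsilon_F\le0$ (no Lagrange multiplier is needed for the Coulomb gauge, since $[\gamma_\star,H_{\rho_\star,\bA_\star}]=0$ already forces the total current to be divergence free).

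Finally, the regularity follows by bootstrapping~\eqref{eq:Euler-Lagrange}. From $\bA_\star\in H^1_{\div}\hookrightarrow L^6$ and finite Pauli kinetic energy one gets $\Ran\gamma_\star\subset H^1$, hence $\bj_{\gamma_\star},\curl\bm_{\gamma_\star},\bA_\star\rho_{\gamma_\star}\in L^{3/2}$; elliptic regularity in $-\Delta\bA_\star=-4\pi\alpha^2\big(\tfrac12(\bj_{\gamma_\star}+\curl\bm_{\gamma_\star})+\bA_\star\rho_{\gamma_\star}\big)$ then improves $\bA_\star$, so that $H_{\rho_\star,\bA_\star}$ is $-\tfrac12\Delta$ plus a $\Delta$-bounded perturbation of relative bound $0$; Kato-type elliptic regularity gives $\Ran\1(H_{\rho_\star,\bA_\star}\le\varepsilon_F)\subset H^2$, hence $\Ran\gamma_\star\subset H^2$; this upgrades $\bj_{\gamma_\star},\curl\bm_{\gamma_\star},\bA_\star\rho_{\gamma_\star}$ to $L^6$, and a last application of elliptic regularity gives $\bA_\star\in W^{2,6}$.

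I expect the main difficulty to be the loss of compactness at infinity: establishing the coercive a priori bound in the strictly subcritical regime, and above all ruling out $N_\star<N$ via the attractive-Coulomb-tail argument that forces $N\le Z$.
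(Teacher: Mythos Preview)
Your proposal is correct and follows essentially the same route as the paper: direct method, a priori bounds from subcriticality, weak compactness with lower semicontinuity (the paper's Lemma~\ref{lem:compactness}), the attractive-Coulomb-tail argument to rule out $N_\star<N$ (which the paper also cites from~\cite{carlos}), and a bootstrap for regularity. The one place worth sharpening is the a priori bound: rather than asserting that the stability proof ``should yield'' a coercive estimate, the paper picks an intermediate $\alpha<\alpha'<\alpha_c(z,N)$ and writes
\[
\cE_\alpha(\gamma_n,\bA_n)=\cE_{\alpha'}(\gamma_n,\bA_n)+\frac{1}{8\pi}\Big(\frac{1}{\alpha^2}-\frac{1}{(\alpha')^2}\Big)\int_{\R^3}\bB_n^2\ \ge\ I(\alpha',N)+c\int_{\R^3}\bB_n^2,
\]
which immediately bounds $\|\bB_n\|_{L^2}$ since $I(\alpha',N)>-\infty$; the kinetic bound then follows from the estimates of Lemma~\ref{lem:useful_ineq}. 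Also, in the regularity step the paper bootstraps in the opposite order (eigenfunctions to $H^2$ via the Yukawa kernel first, then $\bA\in W^{2,6}$), which avoids the intermediate elliptic step on $\bA$ that you left implicit.
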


\begin{remark}
    The quantity $\bj + \bA \rho$, sometimes called the {\em physical current}, is gauge invariant. If $\bA' = \bA + \nabla \mu$ and $\gamma' = \re^{ \ri \mu(\cdot)} \gamma \re^{ - \ri \mu(\cdot)}$, then $\bj' + \bA' \rho = \bj + \bA \rho$. Taking the divergence in the last equation gives the {\em continuity equation}  $\div \left( \bj + \bA \rho  \right) = 0$.
\end{remark}


We end this section with some properties of the function $\beta(z,N)$ defined in~\eqref{eq:def:betazN}. The proof of the following Proposition can be read in Section~\ref{sec:proof:betac}.
\begin{proposition} \label{prop:betac}
    ~ 
    \begin{enumerate}[(i)]
        \item For all $z > 0$ and for all $N > 0$, $\beta(z,N) < 0$.
        \item The maps $N \mapsto \beta(z,N)$ and $z \mapsto \beta(z,N)$ are non-increasing.
        \item There is $C_\beta > 0$ such that the limit $\displaystyle \beta_c(z) := \inf_{N \in \N^*} \beta(z,N) = \lim_{N \to \infty} \beta(z,N)$ satisfies
        $$\beta_c(z) \ge -C_\beta z^{7/6} \qquad (> -\infty).$$ 
        \item For any $N$ and $z$, the problem~\eqref{eq:def:betazN} defining $\beta(z,N)$ admits a minimiser. 
    \end{enumerate}
\end{proposition}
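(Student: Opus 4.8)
The plan is to dispatch (i), (ii), (iv) by fairly standard arguments and to put the effort into (iii). For \textbf{(i)}, I would produce an explicit competitor: starting from a Loss--Yau zero mode $(\psi_0,\bA_0)$ with $\bsigma\cdot(\bp+\bA_0)\psi_0=0$, $\|\psi_0\|_{L^2}=1$, and $\psi_0$ decaying like $|x|^{-3}$ (so $\nu_0:=\int|\psi_0|^2/|x|$ and $d_0:=D(|\psi_0|^2,|\psi_0|^2)$ are finite and positive), the scaling $\psi\mapsto\lambda^{3/2}\psi_0(\lambda\cdot),\ \bA\mapsto\lambda\bA_0(\lambda\cdot)$ preserves the zero-mode equation and $\|\psi\|_{L^2}$ while multiplying $\int|\curl\bA|^2$ by $\lambda$, so I may normalise $\int|\curl\bA_0|^2=1$. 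For $t\in(0,\min(1,N)]$ the pair $(t\,|\psi_0\rangle\langle\psi_0|,\bA_0)$ then lies in $\cZ$, satisfies $\int\bB^2=1$ and $\Tr\gamma=t\le N$, and its objective is $t(\tfrac t2 d_0-z\nu_0)$, which is negative once $t<2z\nu_0/d_0$; hence $\beta(z,N)<0$. For \textbf{(ii)} both monotonicities are immediate: enlarging $N$ enlarges the admissible set, and for a fixed admissible $(\gamma,\bA)$ the objective is non-increasing in $z$ because $\int\rho_\gamma/|x|\ge0$. For \textbf{(iii)} the limit $\beta_c(z)=\lim_N\beta(z,N)=\inf_N\beta(z,N)$ then exists in $[-\infty,0)$, and it suffices to prove $\beta(z,N)\ge -C_\beta z^{7/6}$ with $C_\beta$ independent of $N$.

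The crux of (iii) is an a priori bound: \emph{for every $(\gamma,\bA)\in\cZ$ with $\int\bB^2=1$ the magnetic kinetic energy $T:=\Tr\bigl((\bp+\bA)^2\gamma\bigr)$ is bounded by an absolute constant $T_{\max}$.} I would prove this from the Pauli identity $[\bsigma\cdot(\bp+\bA)]^2=(\bp+\bA)^2+\bsigma\cdot\bB$: since $\Tr\bigl([\bsigma\cdot(\bp+\bA)]^2\gamma\bigr)=0$, and $|\bm_\gamma|\le\rho_\gamma$ pointwise,
$$ T=-\int_{\R^3}\bB\cdot\bm_\gamma\ \le\ \int_{\R^3}|\bB|\,\rho_\gamma\ \le\ \|\bB\|_{L^2}\,\|\rho_\gamma\|_{L^2}=\|\rho_\gamma\|_{L^2}. $$
On the other hand the magnetic Lieb--Thirring inequality gives $\|\rho_\gamma\|_{L^{5/3}}\le C\,T^{3/5}$, and the diamagnetic inequality combined with $\dot H^1\hookrightarrow L^6$ gives $\|\rho_\gamma\|_{L^3}\le C\,T$; interpolating, $\|\rho_\gamma\|_{L^2}\le\|\rho_\gamma\|_{L^{5/3}}^{5/8}\|\rho_\gamma\|_{L^3}^{3/8}\le C\,T^{3/4}$. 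Feeding this back, $T\le C\,T^{3/4}$, hence $T\le T_{\max}$; in particular $\|\rho_\gamma\|_{L^{5/3}}$ and $\|\rho_\gamma\|_{L^3}$ are then bounded by absolute constants.

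With $T$ frozen I would estimate the Coulomb attraction by splitting at a radius $R>0$: the inner part by Hölder, $\int_{|x|\le R}\rho_\gamma/|x|\le(8\pi)^{2/5}R^{1/5}\|\rho_\gamma\|_{L^{5/3}}\le C\,R^{1/5}T^{3/5}$; the outer part by writing $|x|^{-1}=(\mu_R*|\cdot|^{-1})(x)$ on $\{|x|>R\}$ with $\mu_R$ the uniform probability measure on the sphere of radius $R$, and Cauchy--Schwarz for the Coulomb bilinear form, $\int_{|x|>R}\rho_\gamma/|x|\le D(\rho_\gamma,\mu_R)\le D(\rho_\gamma,\rho_\gamma)^{1/2}D(\mu_R,\mu_R)^{1/2}=D(\rho_\gamma,\rho_\gamma)^{1/2}R^{-1/2}$. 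Optimising over $R$ and using $T\le T_{\max}$ gives $z\int\rho_\gamma/|x|\le C\,z\,D(\rho_\gamma,\rho_\gamma)^{1/7}$, and Young's inequality ($Cz\,D^{1/7}\le\tfrac14 D+C'z^{7/6}$) then yields
$$ \tfrac12 D(\rho_\gamma,\rho_\gamma)-z\int_{\R^3}\frac{\rho_\gamma}{|x|}\ \ge\ \tfrac14 D(\rho_\gamma,\rho_\gamma)-C'z^{7/6}\ \ge\ -C'z^{7/6}, $$
uniformly over all such $(\gamma,\bA)$ and any $N$; the exponent $7/6$ is exactly what this Young step produces. This proves (iii) with $C_\beta=C'$.

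For \textbf{(iv)} I would run the direct method. Along a minimising sequence $(\gamma_n,\bA_n)$: $\int|\bB_n|^2=1$ and the Coulomb gauge bound $\bA_n$ in $H^1_{\div}$; the a priori bound above bounds $T_n$, hence $\rho_{\gamma_n}$ in $L^1\cap L^{5/3}\cap L^3$; and the last chain bounds $D(\rho_{\gamma_n},\rho_{\gamma_n})$. Passing to weak limits $\bA_n\rightharpoonup\bA$ (so $\bA_n\to\bA$ in $L^p_{\loc}$, $p<6$) and $\gamma_n\rightharpoonup\gamma$: since $\beta(z,N)<0$, $z\int\rho_{\gamma_n}/|x|$ stays above a positive constant, which together with the uniform tail bound $\int_{|x|>R}\rho_{\gamma_n}/|x|\le C R^{-1/2}$ excludes vanishing, and a standard concentration-compactness argument excludes dichotomy (a piece escaping to infinity contributes nothing to the Coulomb attraction and only adds to $D$ and to $\int\bB^2$, hence can be discarded and the remainder rescaled to a strictly lower value); a Rellich-type argument using the kinetic bound then gives local strong convergence of $\gamma_n^{1/2}$. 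This yields $\int\rho_{\gamma_n}/|x|\to\int\rho_\gamma/|x|$ and, by weak lower semicontinuity of $D$, $\tfrac12 D(\rho_\gamma,\rho_\gamma)-z\int\rho_\gamma/|x|\le\beta(z,N)$; the constraint passes to the limit since $\bsigma\cdot(\bp+\bA_n)\gamma_n^{1/2}=0$ together with local strong convergence of $\gamma_n^{1/2}$ and $L^p_{\loc}$-convergence of $\bA_n$ gives $\bsigma\cdot(\bp+\bA)\gamma^{1/2}=0$, i.e. $(\gamma,\bA)\in\cZ$; $\Tr\gamma\le N$ by lower semicontinuity of the trace; and $\int\bB^2=1$ (were it $c<1$, the rescaled pair would be feasible with objective $c^{-1}\beta(z,N)<\beta(z,N)$, absurd). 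I expect the main obstacle to be the a priori bound $T\le T_{\max}$ — it is what makes $\beta(z,N)$ bounded \emph{uniformly in $N$} and feeds everything else — with, in (iv), the delicate point being the exclusion of dichotomy while respecting the nonlinear (but elliptic) zero-mode constraint and the normalisation of the field.
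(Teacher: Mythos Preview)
Your arguments for (i) and (ii) coincide with the paper's. For (iii) and (iv) your proof is correct but takes a genuinely different route from the paper; let me compare.

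\textbf{Part (iii).} The paper does \emph{not} isolate an absolute bound on $T=\Tr((\bp+\bA)^2\gamma)$. Instead it introduces, for each $\lambda>0$, the penalised functional
\[
\cF_z^\lambda(\gamma,\bA)=\tfrac12 D(\rho,\rho)-z\int\frac{\rho}{|x|}+\lambda\,\Tr\bigl([\bsigma\cdot(\bp+\bA)]^2\gamma\bigr),
\]
notes that $\cF_z=\cF_z^\lambda$ on zero modes, drops the trace constraint, and bounds the resulting infimum from below by a density-only functional via Lieb--Thirring and Hoffmann--Ostenhof. A change of variables then reduces this lower bound to the (unconstrained) Thomas--Fermi energy plus $-C\lambda$, and the specific choice $\lambda\sim z^{7/6}$ yields $\beta_c(z)\ge (I^{\rm TF}-C)\,z^{7/6}$. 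Your approach is more direct: the closed-loop estimate $T\le\|\rho\|_{L^2}\le C\,T^{3/4}$ for zero modes with $\|\bB\|_{L^2}=1$ gives an absolute $T_{\max}$ at once, and the Newton-sphere splitting $\int\rho/|x|\le C R^{1/5}+D(\rho,\rho)^{1/2}R^{-1/2}$, optimised in $R$, produces $z\int\rho/|x|\le C z\,D^{1/7}$ and hence the $z^{7/6}$ exponent via Young. What your route buys is a self-contained estimate that makes the $N$-independence transparent and avoids the auxiliary parameter $\lambda$; what the paper's route buys is a clean link to Thomas--Fermi theory, which explains \emph{why} the exponent is $7/6$ (it is the Thomas--Fermi scaling in $z$).

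\textbf{Part (iv).} The paper's proof is much shorter than yours because all the compactness work has already been packaged into Lemma~\ref{lem:compactness}: any minimising sequence satisfies its hypotheses, the weak limit is automatically a zero mode with $\int\rho_\gamma/|x|=\lim\int\rho_{\gamma_n}/|x|$ and $D(\rho_\gamma,\rho_\gamma)\le\liminf D(\rho_{\gamma_n},\rho_{\gamma_n})$, hence a minimiser after rescaling $\|\bB\|$ to $1$. Your concentration--compactness discussion is not wrong, but the dichotomy step you flag as delicate is in fact unnecessary here: the constraint is $\Tr\gamma\le N$, not $\Tr\gamma=N$, so any mass that escapes can simply be dropped---the weak limit (with possibly smaller trace) is already admissible and already achieves $\beta(z,N)$. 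Your final argument that $\int|\bB^*|^2=1$ (else rescaling by $1/c$ would beat $\beta(z,N)$, using $\beta(z,N)<0$) is exactly how the paper closes as well.
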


The third point implies that if $\alpha^2 z^{7/6} < (8 \pi C_\beta)^{-1}$, then the system is stable for all $N$. This inequality was already proved in~\cite{LL2} for the one-electron atom in the Schrödinger model.

\subsection{The magnetic rHF model for periodic systems}
\label{sec:presentation_per}
The rHF model can describe molecular systems, as well as infinite periodic systems. The energy (per unit cell) of such crystals is defined by mean of thermodynamic limit in~\cite{Catto2001}. The authors proved that the limit has a simple characterisation, and the resulting model was extensively studied~\cite{Cances2008, cances2010dielectric, Cances2012,GL2015,Gontier2016supercell}.

Let $\Lat$ be a periodic lattice representing the periodicity of the crystal. Let $\WS$ be the Wigner-Seitz cell of this lattice, and let $\RLat$ be the reciprocal lattice. The electrons interact with a periodic arrangement of $M$ nuclei of charges $\{ z_j\}_{1 \le j \le M} \in (\R^+)^M$ located at $\{ \bR_j \}_{1 \le j \le M} \in (\WS)^M$. The periodic nuclear potential is similar to the molecular one in~\eqref{eq:def:V}, and is given by
\begin{equation}
 \label{eq:def:V_per}
         V_\per(\bx) := - \sum_{j=1}^M z_j G_\Lat(\bx - \bR_j),
\end{equation}
where $G_\Lat$ is the Green's function of the $\Lat$-periodic Laplace operator, solution of
$$
-\Delta G_\Lat = 4\pi \bra{-1 +\sum _{\bR \in \Lat }\delta_\bR},
\quad \text{or} \quad 
G_\Lat(\bx) := 4\pi \sum_{\bk \in \RLat \setminus \{ \bnull \} }\frac{\re^{\ri \bk\cdot \bx}}{|\bk|^2}.
$$
The total charge of the nuclei in a unit cell is $Z := \sum_{j=1}^M z_j$. For the crystal to be neutral, we need to take $N = Z$ electrons per unit cell, so in the periodic case, the number $N$ is fixed.

\medskip

An $\Lat$-periodic electronic system is described by a periodic one-body density matrix
\[
    \gamma \in \cP_\per^N := \left\{ \gamma\in \cS\bra{L^2(\RR^3,\C^2)},\ 0\leq \gamma \leq 1,\ \forall \bk\in \Lat, \ \tau_\bk\gamma=\gamma\tau_\bk, \ \VTr_\Lat (\gamma) = N \right\},
\]
where $\tau_\bk:f\in L^2(\RR^3)\mapsto f(\cdot+\bk)$ is the usual translation operator, and $\VTr_\Lat$ denotes the trace per unit cell, defined for any locally trace class operators that commute with $\Lat$-translations by $\VTr_\Lat(\gamma)= \Tr \bra{\1_\WS \gamma \1_\WS}$.

Concerning magnetic fields, it is natural to consider $\Lat$-periodic magnetic fields $\bB$. If $\bA$ is periodic, then so is $\bB$, but the converse is not true: a constant magnetic field $\bB$ has a diverging field $\bA$. In this article, we focus on the easier case where $\bA$ is also periodic. It would be interesting to extend our results without this extra assumption. However, if $\bA$ is not periodic, the operator $[ \bsigma \cdot( \bp + \bA)]^2 \gamma$ no longer commutes with $\Lat$-translations, which creates some difficulties. 

For any constant vector $\ba_0 \in \R^3$, the field $\bA + \ba_0$ produces the same $\bB$. We therefore consider
$$
H^1_{\div, \per} := \left\{ \bA\in H^1_\per(\R^3, \R^3), \ \bB := \curl \bA\in L^2_\per(\R^3, \R^3),\  \div \bA=0, \ \int_\WS \bA= \bnull \right\}.
$$
The periodic rHF energy per unit cell of $(\gamma, \bA) \in \cP^N_\per \times  H^1_{{\div}, \per}$ is
\begin{equation} \label{eq:def:ErHF_per}
    \cE_{\per, \alpha}(\gamma,\bA) :=
    \frac{1}{2}\VTr_\Lat \bra{ [ \bsigma \cdot (\bp + \bA)]^2 \gamma}
    +\int_\WS V_\per \rho_\gamma+\frac{1}{2}D_\Lat(\rho_\gamma, \rho_\gamma)+ \frac{1}{8\pi \alpha^2}\int_\Gamma \bB^2.
\end{equation}
Here, the periodic Hartree term is defined with the quadratic form
$$
D_\Lat(f,g) := \iint_{(\WS)^2} G_\Lat(\bx-\by)f(\bx)g(\by) \rd \bx \rd \by.
$$
Finally, the rHF ground state energy with self-generated magnetic field per unit cell is given by
\begin{equation} \label{eq:def:Ialpha_per}
    \boxed{I_\per(\alpha) := \inf \left\{ \cE_{\per,\alpha}(\gamma,\bA), \ \gamma \in \cP_\per^N, \ \bA \in H^1_{\div, \per} \right\}.}
\end{equation}


\subsection{Main results in the periodic case}

Our result in the periodic setting are similar to the ones in the molecular case. First, we have the following Theorem, which is the periodic equivalent of Theorem~\ref{th:(in)stability}.

\begin{theorem}[Stability and instability in the periodic case]
    \label{th:(in)stability_per}
    Let $V_\per$ be a periodic nuclear potential of the form~\eqref{eq:def:V_per}, let $z := \max_{1 \le j \le M} \{ z_j \}$ be the maximal nuclear charge in the system, and let $Z := \sum_{j=1}^M z_j$ be the total charge per unit cell. Then, for  $N = Z$,
    \begin{itemize}
        \item For all $0 \leq \alpha <  \alpha_{c}(z,N)$ the energy $\cE_{\per, \alpha}$ defined in~\eqref{eq:def:ErHF_per} is bounded from below on $\cP^N_\per \times H^1_{\div, \per}$. The map $\alpha \mapsto I_\per(\alpha^{-2})$ is continuous, concave and non-decreasing on $[0, \alpha_c) $.
        \item For all $\alpha > \alpha_{c}(z,N)$, the energy $\cE_{\per, \alpha}$ is not bounded from below on $\cP^N_\per \times H^1_{\div, \per} $.
    \end{itemize}
\end{theorem}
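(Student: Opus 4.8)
The plan is to reduce the periodic statement to the molecular one (Theorem~\ref{th:(in)stability}) by exploiting the fact that both instability and stability are governed by the same local functional, namely the one defining $\beta(z,N)$ in~\eqref{eq:def:betazN}. The key structural observation is that the quantity $\beta(z,N)$ is intrinsically a \emph{local} object: a zero-mode pair $(\gamma,\bA)\in\cZ$ with compactly supported $\bA$ (such minimisers exist by Proposition~\ref{prop:betac}(iv), and can be taken compactly supported up to approximation) can be used in \emph{both} the molecular and the periodic setting, since near a single nucleus the periodic potential $V_\per$ and the Coulomb potential $-z/|\bx|$ differ only by a bounded ($C^\infty$ near $\bR_j$) function.

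\emph{Instability for $\alpha > \alpha_c(z,N)$.} First I would pick a nucleus $\bR_{j_0}$ with charge $z_{j_0} = z$ and, after translating so that $\bR_{j_0}=\bnull$, write $V_\per(\bx) = -z/|\bx| + W(\bx)$ with $W$ bounded near $\bnull$. Given $\varepsilon>0$ with $\alpha^2 > (-8\pi(\beta(z,N)+\varepsilon))^{-1}$, choose an almost-optimal pair $(\gamma_0,\bA_0)\in\cZ$ for $\beta(z,N)$ with $\bA_0$ (hence $\gamma_0$, via a suitable finite-rank truncation of $\gamma_0$ whose range still lies in the kernel — or more carefully, using that the kernel is finite-dimensional for $\bA_0$ in a dense class) compactly supported in a ball $B_{R_0}$, normalised so $\int\bB_0^2 = 1$ and $\Tr(\gamma_0)\le N$. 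Then rescale: set $\bA_\lambda(\bx) := \lambda\,\bA_0(\lambda\bx)$ and let $\gamma_\lambda$ be the correspondingly dilated density matrix, so that $\mathrm{Ran}\,\gamma_\lambda\subset\Ker(\bsigma\cdot(\bp+\bA_\lambda))$, $\rho_{\gamma_\lambda}(\bx) = \lambda^3\rho_{\gamma_0}(\lambda\bx)$, $\VTr_\Lat(\gamma_\lambda) = \Tr(\gamma_\lambda) = \Tr(\gamma_0)\le N$ for $\lambda$ large (the support of $\gamma_\lambda$ shrinks into $\WS$), and all the magnetic/kinetic terms vanish because the pair is a zero mode. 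Evaluating $\cE_{\per,\alpha}$ on $(\gamma_\lambda,\bA_\lambda)$: the kinetic term is zero, the field term scales like $\lambda\cdot\frac{1}{8\pi\alpha^2}$, the Hartree term $\frac12 D_\Lat(\rho_{\gamma_\lambda},\rho_{\gamma_\lambda})$ behaves like $\frac{\lambda}{2}D(\rho_{\gamma_0},\rho_{\gamma_0}) + O(1)$ (since $G_\Lat(\bx) = |\bx|^{-1} + (\text{smooth})$ near $\bnull$), and $\int_\WS V_\per\rho_{\gamma_\lambda} = -z\lambda\int|\bx|^{-1}\rho_{\gamma_0} + O(1)$. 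Hence $\cE_{\per,\alpha}(\gamma_\lambda,\bA_\lambda) = \lambda\bigl(\beta(z,N)+\tfrac{1}{8\pi\alpha^2}\bigr) + o(\lambda) + O(1) < 0$ growing in magnitude as $\lambda\to\infty$ once $\frac{1}{8\pi\alpha^2}<-\beta(z,N)$, giving $I_\per(\alpha) = -\infty$.

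\emph{Stability for $\alpha < \alpha_c(z,N)$, and the properties of $\alpha\mapsto I_\per(\alpha^{-2})$.} For the lower bound I would follow the molecular argument of Section~\ref{sec:proof_stablity} essentially verbatim, replacing the Coulomb potential by $V_\per$ and the lattice-wide estimates by their periodic counterparts: split $V_\per = \sum_j (-z_j G_\Lat(\cdot-\bR_j))$, bound each singular part near $\bR_j$ against $z_j\le z$ times the zero-mode functional, use a Coulomb-uncertainty / binding-type localisation (IMS or the periodic analogue) to confine the relevant density mass near a single nucleus, and absorb the smooth remainder of $G_\Lat$ into the bounded terms; the constant is exactly $\beta(z,N)$, so $\alpha<\alpha_c$ yields a finite lower bound. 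For the regularity in $\alpha$: writing $t := \alpha^{-2}$, the energy $\cE_{\per,\alpha} = \cE_{\per,0}^{\mathrm{(no\ field\ cost)}} + \frac{t}{8\pi}\int\bB^2$ is affine and non-decreasing in $t$ for each fixed $(\gamma,\bA)$, so $I_\per(\alpha^{-2}) = I_\per(t)$ is an infimum of affine non-decreasing functions of $t$, hence concave and non-decreasing on the interval where it is finite, i.e. $(1/\alpha_c^2,\infty)$ in $t$, which is $[0,\alpha_c)$ in $\alpha$; continuity on the open interval follows from concavity, and continuity at the endpoints (in particular at $\alpha = 0$, where $I_\per(0^{-2})$ should read $I_\per(+\infty)$, the field-free periodic rHF energy) from monotone convergence as $t\to\infty$, together with an equicontinuity/bounding argument identical to the one used in the molecular case.

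\emph{Main obstacle.} The delicate point is the stability half: the periodic trace-per-unit-cell $\VTr_\Lat$ and the periodic Hartree form $D_\Lat$ do not localise as cleanly as their molecular analogues, so the key technical step is to show that the competition for a lower bound is still decided \emph{nucleus by nucleus} — i.e. that one can run a concentration-compactness / binding inequality argument in the periodic cell so that only the mass $\Tr(\gamma)\le N$ concentrating near a single $\bR_j$ can drive the energy down, exactly as encoded in the constraint $\Tr(\gamma)\le N$ in~\eqref{eq:def:betazN}. Once this localisation is in place, the identification of the threshold with $\alpha_c(z,N)$ — and in particular the fact that it is the \emph{same} $\alpha_c$ as in the molecular case — is forced by the near-nucleus equality $G_\Lat(\bx) = |\bx|^{-1} + O(1)$ and the scaling computation above.
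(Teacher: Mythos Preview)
Your overall strategy --- reduce both directions to the local functional defining $\beta(z,N)$ via the expansion $G_\Lat(\bx) = |\bx|^{-1} + (\text{smooth on }\WS)$ --- is exactly the paper's, and your treatment of the stability direction and of the concavity/monotonicity of $\alpha\mapsto I_\per(\alpha^{-2})$ matches the paper essentially verbatim.

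There are, however, two genuine gaps in your instability argument. First, you assume that a near-optimal zero-mode $(\gamma_0,\bA_0)\in\cZ$ can be taken with $\bA_0$ and $\gamma_0$ of compact support. This is not justified: truncating $\bA_0$ perturbs the kernel of $\bsigma\cdot(\bp+\bA_0)$, so there is no reason the truncated pair remains a zero-mode, and your parenthetical about finite-rank truncations or finite-dimensional kernels does not resolve this. The paper sidesteps the issue entirely: it starts from an \emph{arbitrary} zero-mode, dilates it by $\lambda$, and only \emph{then} multiplies by fixed cut-offs $\chi,\chi'$ supported well inside $\WS$ (periodised over $\Lat$). The resulting pair is no longer a zero-mode, but one checks directly that the kinetic term equals $\int_{\R^3}\rho\,|\nabla\chi|^2(\lambda^{-1}\cdot)\,\lambda^{-1} = o(\lambda)$ (using $\rho\in L^1\cap L^3$ for zero-modes, Remark~\ref{rem:rho_zeromode}), and the cut-off errors for $D_\Lat$, $V_\per$ and $\int_\WS\bB^2$ are likewise $o(\lambda)$. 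This is precisely why the molecular proof in Section~\ref{sec:proof_stablity}, Step~2, was written with cut-offs in the first place (``In order to have a unified proof that also works in the periodic case\ldots'').

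Second, you record $\VTr_\Lat(\gamma_\lambda)=\Tr(\gamma_0)\le N$ but never restore equality, whereas $\cP^N_\per$ requires $\VTr_\Lat(\gamma)=N$ exactly; in the periodic setting there is no ``infinity'' to which missing electrons can be sent. The paper fixes this by adding $\eta_\lambda\,\gamma^{\rm fill}$, where $\gamma^{\rm fill}\in\cP^N_\per$ is a fixed smooth periodic state with $\1_\WS\gamma^{\rm fill}\1_\WS$ supported away from all $\cB(\bR_j,r/2)$, and $\eta_\lambda := 1 - N^{-1}\VTr_\Lat(\gamma_\lambda)\in[0,1]$; this filling contributes only $O(1)$ to the energy and does not affect the $-c\lambda$ divergence.
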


\begin{remark}
    The critical value $\alpha(z,N)$ is the same for the finite system and for the periodic one. This is because, when $\alpha > \alpha_c(z,N)$, minimising sequences for the two problems both concentrate near the nucleus with highest charge $z$. After rescaling, we obtain the definition of $\beta(z,N)$ in~\eqref{eq:def:betazN}. It involves one-body density matrices with $N$ electrons or less: some electrons may not participate in the creation of the unstable zero-mode.
\end{remark}

Since the system is always neutral in the periodic case, minimisers always exist.

\begin{theorem}[Existence of minimisers in the periodic case]
    \label{th:existence_minimiser_per}
    If $\alpha <\alpha_c(z,N)$ and $Z := \sum_{j=1}^M z_j = N$, then the problem~\eqref{eq:def:Ialpha_per} admits a minimiser. In addition, if $(\gamma_\alpha, \bA_\alpha) \in \cP^N_\per \times H^1_{\div, \per}$ is a minimiser, then it satisfies the Euler-Lagrange equations
    \begin{equation} \label{eq:Euler-Lagrange_per}
    \begin{cases}
    \displaystyle \gamma_\alpha = \1\bra{ H_{\rho_\alpha, \bA_\alpha}^\per \leq \varepsilon_F}, \\
    \displaystyle H_{\rho_\alpha, \bA_\alpha}^\per := \frac{1}{2}\left[ \bsigma \cdot (\bp+\bA_\alpha) \right]^2 + V_\per + \rho_\alpha * G_\Lat, \\
    \displaystyle \frac12 \left( \bj_\alpha + \curl \bm_\alpha \right) + \bA_\alpha \rho_\alpha + \dfrac{1}{4 \pi \alpha^2} (- \Delta \bA_\alpha) =
    \mu.
    \end{cases}
    \end{equation}
    where $\varepsilon_F \in \R$, called the {\em Fermi energy}, is chosen so that $\VTr_\Lat(\gamma_\alpha) = N$, and $\mu \in \R$ is chosen so that $\int_\WS \bA_\alpha = \bnull$. \\
    In addition, $\Ran \gamma_\alpha \subset H^2_\per(\R^3, \C^2)$, and $\bA_\alpha \in W^{2,6}_\per(\R^3, \R^3)$.
\end{theorem}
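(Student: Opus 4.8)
The plan is to adapt the concentration-compactness argument of~\cite{carlos} to the periodic setting, exploiting the fact that the periodic problem lives on a compact quotient so that no mass can escape to infinity within a unit cell — only along the lattice directions, which is forbidden by the imposed periodicity. First I would take a minimising sequence $(\gamma_n, \bA_n) \in \cP^N_\per \times H^1_{\div,\per}$ for $I_\per(\alpha)$. Using that $\alpha < \alpha_c(z,N)$, i.e. $\frac{1}{8\pi\alpha^2} > -\beta(z,N)$, together with a rescaling/binding argument near the nucleus of highest charge $z$ (exactly the mechanism described in the Remark following Theorem~\ref{th:(in)stability_per}), I would first establish the a priori bound: there is $\kappa > 0$ with
\[
    \frac12 \VTr_\Lat\bra{[\bsigma\cdot(\bp+\bA_n)]^2 \gamma_n} + \frac{1}{8\pi\alpha^2}\int_\WS \bB_n^2 \le C,
    \qquad \frac{1}{8\pi\alpha^2}\int_\WS \bB_n^2 \le C - \kappa \cdot(\text{Coulomb terms})^{-},
\]
so that both the periodic Pauli kinetic energy and $\|\bB_n\|_{L^2(\WS)}$ stay bounded. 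Since $\bA_n \in H^1_{\div,\per}$ with $\int_\WS \bA_n = \bnull$, the Poincaré inequality on the torus gives $\|\bA_n\|_{H^1_\per}\le C\|\bB_n\|_{L^2}$, hence $\bA_n$ is bounded in $H^1_\per$, and (up to extraction) $\bA_n \rightharpoonup \bA$ weakly in $H^1_\per$ and strongly in $L^p_\per$ for $p<6$ by Rellich on the compact cell.

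Next I would extract a weak-$*$ limit $\gamma_n \rightharpoonup \gamma$ in the sense of operators commuting with $\Lat$-translations, with $0 \le \gamma \le 1$ and $\tau_\bk\gamma=\gamma\tau_\bk$. The crucial point — and the main obstacle — is to show there is no loss of trace, i.e. $\VTr_\Lat(\gamma) = N$, equivalently that $\rho_{\gamma_n} \to \rho_\gamma$ strongly in $L^1(\WS)$ (not merely weakly). Here the periodicity is an advantage: the kinetic bound controls $\rho_{\gamma_n}^{1/2}$ (or rather the diagonal of the magnetic Sobolev space) uniformly in a periodic Sobolev norm, and since $\WS$ is bounded, the embedding into $L^1(\WS)$ is compact; the diamagnetic inequality $|\nabla|\rho_{\gamma_n}|^{1/2}| \le$ (magnetic kinetic density) converts the bounded magnetic kinetic energy into an $H^1_\per$ bound on $\rho_{\gamma_n}^{1/2}$, whence $\rho_{\gamma_n} \to \rho_\gamma$ in $L^1(\WS) \cap L^3(\WS)$ by Rellich. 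This strong convergence passes to the limit in $\int_\WS V_\per \rho_{\gamma_n}$ (using $V_\per \in L^p_\per$ for some $p>1$, since $G_\Lat$ has only an integrable Coulomb singularity) and in $D_\Lat(\rho_{\gamma_n},\rho_{\gamma_n})$. The kinetic and field terms are weakly lower semicontinuous — for the kinetic term one writes $\VTr_\Lat([\bsigma\cdot(\bp+\bA_n)]^2\gamma_n)$ and uses that $\bA_n \to \bA$ strongly in $L^p_\per$ ($p<6$) while $\gamma_n$ and $(\bp+\bA_n)\gamma_n$ converge weakly, together with convexity (Fatou-type lsc for the quadratic form), exactly as in~\cite{carlos}. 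Combining, $\cE_{\per,\alpha}(\gamma,\bA) \le \liminf \cE_{\per,\alpha}(\gamma_n,\bA_n) = I_\per(\alpha)$, and since $\VTr_\Lat(\gamma)=N$ the pair $(\gamma,\bA)$ is admissible, hence a minimiser.

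For the Euler-Lagrange equations, I would perturb the minimiser: varying $\bA$ in $H^1_{\div,\per}$ with a Lagrange multiplier $\mu \in \R^3$ for the constraint $\int_\WS \bA = \bnull$ yields the third equation $\frac12(\bj_\alpha + \curl\bm_\alpha) + \bA_\alpha\rho_\alpha + \frac{1}{4\pi\alpha^2}(-\Delta\bA_\alpha) = \mu$; varying $\gamma$ within $\cP^N_\per$ (which is convex, with the trace-per-cell constraint giving the scalar multiplier $\varepsilon_F$) gives $\gamma_\alpha = \1(H^\per_{\rho_\alpha,\bA_\alpha} \le \varepsilon_F)$ — here, unlike the molecular case, no $\delta$-term on the Fermi level is expected because the spectrum of the periodic Schrödinger operator is absolutely continuous, so $\1(H^\per = \varepsilon_F) = 0$ generically; if one wants full rigour one keeps $0 \le \delta \le \1(H^\per = \varepsilon_F)$ and then argues it vanishes. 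Finally, elliptic regularity bootstraps: $\bA_\alpha$ solves a Poisson-type equation with right-hand side in $L^2_\per \cap L^{6/5}_\per$ (from the current and magnetisation, controlled by the $H^2_\per$ regularity of the orbitals), giving $\bA_\alpha \in W^{2,6}_\per$; feeding this back, $H^\per_{\rho_\alpha,\bA_\alpha}$ has coefficients regular enough that its low-lying eigenfunctions lie in $H^2_\per(\R^3,\C^2)$, so $\Ran\gamma_\alpha \subset H^2_\per$. I expect the only genuinely delicate point to be the no-loss-of-compactness step, but the periodic geometry makes it substantially easier than in the molecular case of~\cite{carlos}, since translations to infinity are quotiented out.
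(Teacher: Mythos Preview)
Your proposal follows essentially the same route as the paper's proof in Section~\ref{sec:proof_existence}: take a minimising sequence, establish a priori bounds, use the Hoffmann--Ostenhof inequality to bound $\sqrt{\rho_{\gamma_n}}$ in $H^1_\per$ and then Rellich compactness on the unit cell to get strong convergence of $\rho_{\gamma_n}$ in $L^1(\WS)$ (so $\VTr_\Lat(\gamma)=N$ is preserved, exactly as in the Remark the paper makes after the molecular case), and conclude by lower semicontinuity; the Euler--Lagrange derivation and elliptic bootstrap are likewise the same.

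The one place where the paper is cleaner is the a priori bound on $\|\bB_n\|_{L^2(\WS)}$. Your ``rescaling/binding argument near the nucleus of highest charge'' would amount to re-running the stability proof inside the existence proof. The paper instead just picks an intermediate $\alpha < \alpha' < \alpha_c(z,N)$ and writes
\[
\cE_{\per,\alpha}(\gamma_n,\bA_n) = \cE_{\per,\alpha'}(\gamma_n,\bA_n) + \frac{1}{8\pi}\Bigl(\frac{1}{\alpha^2}-\frac{1}{(\alpha')^2}\Bigr)\int_\WS \bB_n^2 \ge I_\per(\alpha') + \frac{1}{8\pi}\Bigl(\frac{1}{\alpha^2}-\frac{1}{(\alpha')^2}\Bigr)\int_\WS \bB_n^2,
\]
and since $I_\per(\alpha') > -\infty$ by Theorem~\ref{th:(in)stability_per}, the bound on $\int_\WS \bB_n^2$ (and then on the kinetic term) is immediate. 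This trick is worth internalising: once stability is proved for all $\alpha < \alpha_c$, the coercivity in $\bB$ comes for free.
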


In our definition of the periodic model~\eqref{eq:def:ErHF_per}, we have constrained $\gamma$ and $\bA$ to commute with $\Lat$-translations (in particular, the density $\rho_\gamma$ is $\Lat$-periodic). We would like to prove the thermodynamic limit, that is to study the model when we restrict $\gamma$ to commute with $L \Lat$-translations, and take the limit $L \to \infty$. When $\alpha = 0$ (no magnetic field), the problem is strictly convex in $\rho_\gamma$, hence the $L \Lat$-periodic problem is equivalent to the $\Lat$-periodic problem. In the general case however, we may have symmetry breaking.

One simple case where symmetry breaking might happen is the following. Assume there is $z \le N_1 < N_2$ such that $\beta(z, N_1) > \beta(z, N_2)$. Consider the $\Lat$-periodic problem with an arrangement of nuclei with highest charge $z$ and total charge $Z = N_1$, and let $\alpha_2 < \alpha < \alpha_1$, where $\alpha_i := (-8 \pi \beta_c(z, N_i))^{-1/2}$. Since $\alpha < \alpha_1$, the $\Lat$-periodic problem is bounded from below: $N_1$ electrons are not sufficient to create instability. However, for $L$ large enough so that $L^3 N_1 > N_2$, the $L \Lat$-periodic problem has at least $N_2$ electrons which can break symmetry and gather near one of the atom with charge $z$. Since $\alpha > \alpha_2$, the corresponding energy is not bounded from below. We do not know whether this case can happen, that is if there is $z \le N_1 < N_2$ with $\beta(z, N_1) > \beta(z, N_2)$.

The rest of the paper is devoted to the proof of Theorems~\ref{th:(in)stability},\ref{th:(in)stability_per} and Theorems~\ref{th:existence_minimiser},\ref{th:existence_minimiser_per}.

\section{Proof of Theorem~\ref{th:(in)stability}: instability for finite systems}
\label{sec:proof_stablity}

Our arguments follow the ones of~\cite{FLL1}. The main difference is that we need to deal with one-body density matrices $\gamma$ instead of wave-functions, so our model is non-linear. We break the proof into several steps for clarity.

\subsection*{Step 1. First properties of $I(\alpha)$.}
The function $I(\alpha)$ defined in~\eqref{eq:def:Ialpha} is the minimum of functions $\cE_\alpha$ that are linear non-decreasing in $\alpha^{-2}$. This shows that $\alpha \mapsto I(\alpha^{-2})$ is concave and non-decreasing, hence it is continuous on its domain. In particular, $\alpha \mapsto I(\alpha)$ is also continuous and non-increasing on its domain, and we can define
\[
    \widetilde{\alpha_c} := \inf \left\{ \alpha \ge 0, \ I(\alpha) = -\infty  \right\},
\]
so that $I(\alpha) > -\infty$ if $ \alpha < \widetilde{\alpha_c}$, while $I(\alpha) = - \infty$ if $\alpha > \widetilde{\alpha_c}$. We now prove that $\widetilde{\alpha_c} = \alpha_c(z,N)$, where $\alpha_c(z,N)$ has been defined in~\eqref{eq:def:betazN}-\eqref{eq:def:alphac}. 

\subsection*{Step 2. First inequality: $\alpha_c(z,N) \ge \widetilde{\alpha_c}$.} We consider $\alpha > \alpha_c(z,N)$, and we prove that $\alpha \ge \widetilde{\alpha_c}$. By definition of $\beta(z,N)$ introduced in~\eqref{eq:def:betazN}, and since $\alpha > \alpha_c(z,N)$, there exists a zero-mode $(\gamma, \bA) \in \cZ$ with $\int_{\R^3} \bB^2 = 1$ and $\Tr(\gamma) \le N$ such that 
\begin{equation} \label{eq:ineqAlpha}
    \dfrac{-1}{8 \pi \alpha^2} > \frac12 D(\rho_{\gamma}, \rho_{\gamma}) - z \int_{\R^3} \dfrac{\rho_{\gamma}(\bx)}{| \bx |}.
\end{equation}
The idea is to concentrate this zero mode near the nucleus of highest charge $z$. Without loss of generality, we may assume that $z_1 = z$ and $\bR_1 = \bnull$. We set, for $\lambda > 0$,
\begin{equation} \label{eq:dilation}
\gamma_\lambda' (\bx, \by) := \lambda^3 \gamma(\lambda \bx, \lambda \by), \qquad \bA_\lambda'(\bx, \by)  := \lambda \bA(\lambda \bx).
\end{equation}
We have $\Tr(\gamma_\lambda') = \Tr(\gamma) \le N$.
The kinetic energy of $\gamma_\lambda'$ is given by
\begin{equation}\label{eq:scaling1}
    \Tr \left( \left[ \bsigma \cdot ( \bp + \bA_\lambda') \right]^2 \gamma_\lambda' \right)  = \lambda^2 \Tr \left( \left[ \bsigma \cdot ( \bp + \bA) \right]^2 \gamma \right),
\end{equation}
while, for the contribution of the first nucleus, the Hartree term, and the magnetic energy, we have
\begin{align} \label{eq:scaling2}
    & - z \int_{\R^3} \dfrac{\rho_\lambda'(\bx)}{| \bx |} \rd \bx + \frac12 D(\rho_\lambda', \rho_\lambda') + \dfrac{1}{8 \pi \alpha^2} \int_{\R^3} (\bB_\lambda')^2 \\
    & \qquad  = \lambda \left( - z \int_{\R^3} \dfrac{\rho(\bx)}{| \bx |} \rd \bx + \frac12 D(\rho, \rho) + \dfrac{1}{8 \pi \alpha^2} \int_{\R^3} \bB^2 \right). \nonumber
\end{align}

For a general state, the limit $\lambda \to \infty$ would explode because of the $\lambda^2$ scaling in the kinetic energy. However, since $(\gamma, \bA)$ is a zero-mode, the kinetic energy vanishes. On the other hand, from~\eqref{eq:ineqAlpha} and the fact that $\int_{\R^3} \bB^2 = 1$, the parenthesis in the right-hand side of~\eqref{eq:scaling2} is  negative. This would prove that $\cE_\alpha(\gamma_\lambda', \bA_\lambda') \to -\infty$ as $\lambda \to \infty$. 

It remains to control the contribution of the other nuclei. In order to have a unified proof that also works in the periodic case, we present a proof based on cut-off functions. We believe that simpler methods are possible in the molecular case. 
 
Let $r > 0$ be the minimal distance between two nuclei, and let $\chi$ and $\chi'$ be radially decreasing cut-off functions such that 
\begin{equation} \label{eq:def:chichi'}
    \1 (\cB(\bnull, r/4)) \le \chi \le \1 \left( \cB(\bnull, r/2) \right),
    \ \text{and} \
    \1 (\cB(\bnull, r/2)) \le \chi' \le \1 \left(\cB(\bnull, 3r/4) \right).
\end{equation}
We set
\[
    \gamma_\lambda := \chi \gamma_\lambda' \chi \quad \text{and} \quad  \bA_\lambda := \chi' \bA_\lambda'.
\]
By construction, we have $\gamma_\lambda \in \cP$ and $\Tr(\gamma_\lambda) \le \Tr(\gamma'_\lambda) \le N$. Also, since $\chi \chi' = \chi$, we have that $\bA_\lambda = \bA_\lambda'$ on the support of $\chi$. We compute $\cE(\gamma_\lambda, \bA_\lambda)$, and prove that it converges to $-\infty$ as $\lambda \to \infty$. 

\medskip

\noindent \underline{Bound on the kinetic energy.} 
%
Since $\chi \chi' = \chi$, we have $\chi\bA_\lambda = \chi \chi' \bA_\lambda' = \chi \bA_\lambda'$, hence
\begin{align*}
     \bsigma \cdot ( \bp + \bA_\lambda) \chi  
     & = \bsigma \cdot ( \bp + \bA_\lambda') \chi  
      = \left[ \bsigma \cdot ( \bp + \bA_\lambda'), \chi \right] + \chi  \bsigma \cdot ( \bp + \bA_\lambda') \\
     & = -\ri \bsigma \cdot \nabla \chi + \chi \bsigma \cdot (\bp + \bA_\lambda').
\end{align*}
Together with the fact that $(\gamma_\lambda', \bA_\lambda')$ is a zero-mode, we get
\begin{align*}
    & \Tr \left(  \left[\bsigma \cdot ( \bp + \bA_\lambda) \right]^2 \chi \gamma_\lambda' \chi \right)  =  \Tr \left( 
    \left(\bsigma \cdot ( \bp + \bA_\lambda) \chi  \right)
    \gamma_\lambda' 
    \left(\chi \bsigma \cdot ( \bp + \bA_\lambda) \right)
    \right) \\
    & \quad = \Tr \left( (\bsigma \cdot \nabla \chi)^2 \gamma_\lambda' \right)
   = \int_{\R^3} \rho_\lambda' | \nabla \chi |^2 = \lambda \int_{\R^3} \rho(\by) \left( \lambda^{-1} | \nabla \chi |^2(\lambda^{-1} \by ) \right) \rd \by.
\end{align*}
The sequence $\lambda^{-1} | \nabla \chi |^2(\lambda^{-1} \by)$ is bounded in $L^\infty(\R^3)$ and in $L^3(\R^3)$, and has a support which goes to infinity as $\lambda \to +\infty$, hence it converges weakly(-$\ast$) to $0$ in $L^3(\R^3) \cap L^\infty(\R^3)$ as $\lambda \to +\infty$.
On the other hand, $\rho \in L^1(\R^3)$. Actually, we prove later (see Remark~\ref{rem:rho_zeromode} below) that for any zero-mode $(\gamma, \bA) \in \cZ$, we have $\rho_\gamma \in L^1(\R^3) \cap L^{3}(\R^3)$. Hence the integral goes to $0$ as $\lambda \to +\infty$. This proves that the kinetic energy behaves as $o(\lambda)$.

\medskip

\noindent \underline{Bound on the potential energy.} We now turn to the potential energy, and prove that the leading order is $- \lambda z \int_{\R^3} | \bx |^{-1} \rho(\bx) \rd \bx$. 
We separate the first nucleus from the others. For $j >1$, the function $\chi^2(\bx) | \bx - \bR_j |^{-1}$ is bounded. Hence, the contribution of the other nuclei is uniformly bounded with
\[
    \left| -\sum_{j=2}^M z_j \int_{\R^3} \dfrac{\rho_\lambda'(\bx) \chi^2(\bx)}{| \bx - \bR_j |} \rd \bx \right| 
    \le C \sum_{j=2}^M z_j \int_{\R^3} | \rho_\lambda'(\bx) | \rd \bx
    \le C (Z-z) N.
\]
For the first nucleus, we have
\begin{align*}
    \left| \lambda \int_{\R^3} \dfrac{\rho(\bx)}{| \bx |} \rd \bx - \int_{\R^3} \dfrac{\chi^2(\bx) \rho_\lambda'(\bx)}{| \bx |} \rd \bx  \right| & = \lambda \int_{\R^3} \rho(\bx) \dfrac{ \left[1 - \chi^2(\lambda^{-1}\bx) \right] }{| \bx |}  \rd \bx \\
    & \le \lambda N \sup_{\bx \in \R^3} \left( \dfrac{1 - \chi^2(\lambda^{-1}\bx)}{| \bx |} \right).
\end{align*}
Since $1 - \chi^2(\lambda^{-1} \bx)$ vanishes for $\av {\bx }< \lambda r/4$, the last supremum is bounded by $4 (\lambda r)^{-1}$. This proves that the difference is uniformly bounded by $4 N r^{-1}$ for all $\lambda > 0$.

\medskip

\noindent \underline{Bound for the Hartree term.} Again, we prove that the leading order is $\lambda D(\rho, \rho)$. 

We compute the difference
\begin{align*}\
    \left| \lambda D(\rho, \rho) - D(\chi^2 \rho_\lambda', \chi^2 \rho_\lambda') \right| = \lambda \iint_{\R^3} \dfrac{\rho(\bx) \rho(\by)}{| \bx - \by|} \left[ 1 - \chi( \lambda^{-1} \bx) \chi(\lambda^{-1} \by) \right] \rd \bx \rd \by.
\end{align*}
By dominated convergence, the last integral goes to $0$ as $\lambda \to \infty$, hence the difference is $o(\lambda)$.

\medskip

\noindent \underline{Bound on the magnetic energy.} We finally bound $\| \bB_\lambda \|_{L^2}^2$, where $\bB_\lambda := \curl \bA_\lambda$. Since $\bA_\lambda(\bx) = \lambda \chi'(\bx) \bA(\lambda \bx)$, we have
\[
    \bB_\lambda(\bx) = \underbrace{\lambda \nabla \chi'(\bx) \wedge \bA(\lambda \bx)}_{\ba_\lambda(\bx)} + \underbrace{\lambda^2 \chi'(\bx) \bB(\lambda \bx)}_{\bb_\lambda(\bx)}.
\]
Therefore 
$$
    \int_{\R^3}\bB_\lambda^2  -\lambda \int_{\RR^3} \bB^2= \int_{\RR^3} \ba_\lambda^2+2\text{Re}\bra{ \int_{\RR^3}\ba_\lambda \bb_\lambda}+ \int_{\RR^3} \bb_\lambda^2 -\lambda \int_{\RR^3} \bB^2,
$$
For the the first term, we have
$$
\int_{\RR^3} \ba_\lambda^2 = \lambda \int_{\R^3} | \bA |^2(\by) \left(\lambda^{-2} | \nabla \chi'(\lambda^{-1} \by) |^2 \right)\rd \by.
$$
The sequence  $\lambda^{-2} |\nabla \chi'(\lambda^{-1} \by) |^2$ is bounded in $L^{3/2}(\R^3)$, and the support goes to infinity as $\lambda \to \infty$, hence it converges weakly to $0$ in $L^{3/2}(\R^3)$. On the other hand, we have $\bA \in L^6(\R^3)$, hence $\bA^2 \in L^3(\R^3)$. We deduce that this term is $o(\lambda)$. Similarly, we have
$$
\int_{\RR^3} \bb_\lambda^2 -\lambda \int_{\RR^3} \bB^2= \lambda \int_{\R^3}\bra{ | \chi'(\lambda^{-1}\by) |^2 -1}| \bB |^2(\by) \rd\by= o(\lambda),
$$
where we used the fact that $\bB \in L^2(\R^3)$ and dominated convergence. Finally, by Cauchy-Schwarz inequality,
\begin{align*}
\av{\int_{\RR^3}\ba_\lambda \bb_\lambda}& \leq \bra{\int_{\RR^3} a_\lambda^2}^{1/2}\bra{\int_{\RR^3} b_\lambda^2}^{1/2}=o\bra{\sqrt{\lambda}}O\bra{\sqrt{\lambda}}=o(\lambda). 
\end{align*}

\medskip

\noindent \underline{Conclusion}
Altogether, we proved that,
\[
    \cE^\rHF_\alpha(\gamma_\lambda, \bA_\lambda) = \lambda \left(    - z \int_{\R^3} \dfrac{\rho(\bx)}{| \bx |} \rd \bx + \frac12 D(\rho, \rho) + \dfrac{1}{8 \pi \alpha^2} \int_{\R^3} \bB^2    \right) + o(\lambda).
\]
As we already mentioned, the constant in parenthesis is negative, hence this energy goes to $-\infty$ as $\lambda \to \infty$. In addition, for all $\lambda > 0$, $\Tr(\gamma_\lambda) \le N$. By adding some {\em missing} electrons at infinity, we deduce that $\cI(\alpha, N) = - \infty$. This proves that $\alpha \ge \widetilde{\alpha_c}(z,N)$, as claimed.

\subsection*{Step 3. Second inequality: $\widetilde{\alpha_c} \ge \alpha_c(z,N)$.}
Let $\alpha > \widetilde{\alpha_c}$, and let us prove that $\alpha \ge \alpha_c(z,N)$. By definition of $\widetilde{\alpha_c}$, there exists a sequence $(\widetilde{\gamma_n}, \widetilde{\bA_n}) \in \cP^N \times H^1_{\div}$ such that $\cE_\alpha(\widetilde{\gamma_n}, \widetilde{\bA_n}) \to - \infty$. We assume without loss of generality that $z = z_1$ with $\bR_1 = \bnull$.

\medskip

\noindent \underline{Localisation} We first localise our minimising sequences. Again, this allows to adapt the proof for the periodic setting. Recall that $\chi$ and $\chi'$ have been defined in~\eqref{eq:def:chichi'}. We set, for $1 \le j \le M$,
\[
    \chi_j(\bx) := \chi(\bx - \bR_j), \quad \text{and} \quad
    \chi_0(\bx) := \left( 1 - \sum_{j=1}^M \chi_j^2(\bx )  \right)^{1/2},
\]
so that $\sum_{j=0}^M \chi_j^2 = 1$, and
\[
   \chi_j'(\bx) := \chi'(\bx - \bR_j), \quad \text{and} \quad
        \chi_0'(\bx) := 1 - \sum_{j=1}^M \chi_j'(\bx ),
\] 
so that $\sum_{j=0}^M \chi_j' = 1$ (without the square). These functions are chosen so that for all $1 \le i \neq j \le M$, we have $\chi_i \chi_j = \chi_i' \chi_j' = \chi_i \chi_j' = 0$ and for all $1 \le i \le M$, $(1 - \chi_i') \chi_i = 0$. We use the functions $\chi_j$ to localise the one-body density matrices and the functions $\chi_j'$ to localise the magnetic fields.

The following Lemma gives an estimation of the localised energies.
\begin{lemma}\label{lem:localisation}
For any state $(\gamma,\bA) \in \cP^N\times H^1_{\div}$, we denote by
\begin{equation*} 
\gamma_j := \chi_j \gamma \chi_j\quad \text{and}\quad \bA_j=\chi_j'\bA.
\end{equation*}
Then we have $(\gamma_j,\bA_j)\in \cP\times H^1_{\div}$ with $\Tr\bra{\gamma_j}\leq N$. In addition, there is a constant $C \in \R^+$ such that, for any $(\gamma, \bA) \in \cP^N \times H^1_{\div}$ and any $\alpha > 0$, we have
\begin{align*}
    & \sum_{j=1}^M \left( \Tr \left( [\bsigma \cdot (\bp + \bA_j)]^2  \gamma_j  \right)
    + D( \rho_j , \rho_j )
    - z_j \int_{\R^3} \dfrac{\rho_j(\bx)}{| \bx - \bR_j |} \rd \bx \right) + \dfrac{1}{8 \pi \alpha^2} \int_{\R^3} \bB^2
    \\
   & \qquad \le \cE_\alpha(\gamma, \bA) + C.
\end{align*}

\end{lemma}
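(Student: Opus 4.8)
The idea is to use the partition of unity $\sum_{j=0}^M \chi_j^2 = 1$ to decompose the kinetic and potential energies, and the partition $\sum_{j=0}^M \chi_j' = 1$ to decompose the magnetic field term, then to bound each localised piece by (a part of) $\cE_\alpha(\gamma, \bA)$ plus a controlled error. First I would check the easy structural claims: $\gamma_j = \chi_j \gamma \chi_j$ satisfies $0 \le \gamma_j \le \chi_j^2 \le 1$ because $0 \le \gamma \le 1$, so $\gamma_j \in \cP$; and $\Tr(\gamma_j) = \Tr(\chi_j^2 \gamma) \le \Tr(\gamma) = N$ since $\chi_j^2 \le 1$. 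Also $\bA_j = \chi_j' \bA \in H^1_{\div}$ — actually one must be slightly careful here since $\chi_j' \bA$ need not be divergence-free, but for the statement of the Lemma it suffices to interpret $\Tr([\bsigma\cdot(\bp+\bA_j)]^2\gamma_j)$ directly as a quadratic form; I would not dwell on gauge-fixing $\bA_j$ and instead treat the magnetic energy $\int \bB^2$ only on the left.

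**The main work is the IMS localisation formula for the Pauli kinetic energy.** For the Laplacian one has the identity $\sum_j \chi_j (-\Delta) \chi_j = -\Delta + \sum_j |\nabla \chi_j|^2$ (using $\sum \chi_j^2 = 1$, hence $\sum \chi_j \nabla \chi_j = 0$ and $\sum \chi_j \Delta \chi_j = -\sum|\nabla\chi_j|^2$). The same computation works with $\bp$ replaced by $\bp + \bA$: since $\chi_j$ is a real scalar function, $[\bsigma \cdot (\bp+\bA), \chi_j] = -\ri \bsigma \cdot \nabla\chi_j$ commutes past nothing problematic, and one obtains
\[
  \sum_{j=0}^M \Tr\left( [\bsigma\cdot(\bp+\bA)]\, \chi_j \gamma \chi_j \, [\bsigma\cdot(\bp+\bA)] \right)
  = \Tr\left( [\bsigma\cdot(\bp+\bA)]^2 \gamma \right) + \sum_{j=0}^M \int_{\R^3} |\nabla\chi_j|^2 \rho_\gamma.
\]
Because the $\chi_j$ are fixed compactly-supported smooth functions, $\sum_j |\nabla\chi_j|^2 \in L^\infty$ with compact support, but to get a bound by $C$ independent of $\gamma$ one cannot just use $\|\rho_\gamma\|_{L^1} = N$ unless the error constant is allowed to depend on $\|\nabla\chi_j\|_\infty$ — which it is, since $r$ (hence the $\chi_j$) is fixed. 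So $\sum_j \int |\nabla\chi_j|^2\rho_\gamma \le C N =: C$. Next, on the support of $\chi_j$ (for $1\le j\le M$) one has $\chi_j \chi_j' = \chi_j$, so $\bA_j = \chi_j'\bA$ agrees with $\bA$ there and the localised kinetic operator $[\bsigma\cdot(\bp+\bA_j)]^2$ acting between $\chi_j \cdots \chi_j$ equals $[\bsigma\cdot(\bp+\bA)]^2$ there; this is exactly the computation already done in Step 2 of the finite-case proof, so I would just cite it. Dropping the nonnegative $j=0$ term then gives $\sum_{j=1}^M \Tr([\bsigma\cdot(\bp+\bA_j)]^2\gamma_j) \le \Tr([\bsigma\cdot(\bp+\bA)]^2\gamma) + C$.

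**For the potential and Hartree terms the estimates are elementary.** The external potential: $\int V \rho_\gamma = \sum_{j=0}^M \int V \chi_j^2 \rho_\gamma$, and I would show $\sum_{j=1}^M \left( -z_j \int \frac{\rho_j}{|\bx-\bR_j|}\right) \ge \int V \rho_\gamma - CN$ — each term $-z_j\int \chi_j^2 \rho_\gamma/|\bx-\bR_j|$ is exactly the $j$-th piece of $\int V\rho_\gamma$ restricted to the support of $\chi_j$, plus the bounded cross terms $-z_i \int \chi_j^2\rho_\gamma/|\bx-\bR_i|$ with $i\ne j$ and the $\chi_0^2$ piece, all of which are bounded by $C\|\rho_\gamma\|_{L^1}$ since $1/|\bx-\bR_i|$ is bounded away from $\bR_i$ on the relevant supports. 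For the Hartree term I would use the decomposition $\rho_\gamma = \sum_j \rho_{\gamma,j}$ where $\rho_{\gamma,j} := \chi_j^2\rho_\gamma \ge 0$ together with positive-definiteness of $D$: $\sum_j D(\rho_j,\rho_j) \le \sum_{i,j} D(\rho_{\gamma,i},\rho_{\gamma,j}) = D(\rho_\gamma,\rho_\gamma)$, plus noting $\rho_j = \rho_{\gamma_j} = \chi_j^2\rho_\gamma$, giving $\sum_{j=1}^M D(\rho_j,\rho_j) \le D(\rho_\gamma,\rho_\gamma)$ directly (the $j=0$ and cross terms are dropped, all nonnegative). Assembling the four inequalities — kinetic $\le$ kinetic $+ CN$, potential $\le$ potential $+ CN$, Hartree $\le$ Hartree, magnetic equal — and multiplying through by the common factor $\tfrac12$ where appropriate yields the Lemma with $C = C(r, z, N, M)$.

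**The main obstacle** is making the kinetic-energy localisation rigorous at the level of operators and quadratic forms rather than for wavefunctions: one must justify the trace manipulations (cyclicity, that $\chi_j$ maps the form domain to itself, that the commutator $[\bsigma\cdot(\bp+\bA),\chi_j]$ is bounded so all cross terms are finite) and verify that $\gamma_j$ lies in the correct trace class so each localised kinetic energy is well-defined and finite; the IMS algebra itself is routine, and the potential/Hartree bounds are straightforward once one tracks which pieces are nonnegative and can simply be discarded.
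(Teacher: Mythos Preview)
Your proposal is correct and follows essentially the same approach as the paper: the IMS localisation formula for the Pauli kinetic energy (with the error $\sum_j\int|\nabla\chi_j|^2\rho_\gamma\le CN$), the replacement of $\bA$ by $\bA_j$ on the support of $\chi_j$ via $(1-\chi_j')\chi_j=0$, the positivity of the Hartree cross terms, and the boundedness of the off-diagonal Coulomb pieces in the potential term are exactly the paper's ingredients, in the same order. Your remark that $\chi_j'\bA$ need not be divergence-free is well taken; the paper does not address this either and, as you say, it is immaterial for the inequality since only the kinetic quadratic form and the global $\int\bB^2$ appear.
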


\begin{proof}[Proof of Lemma~\ref{lem:localisation}] Since $0 \le \gamma \le 1$, we have $0 \le \gamma_{j} \le 1$ and $\Tr(\gamma_j) \le \Tr(\gamma) = N$. Also, $\rho_j = \rho \chi_j^2$ and $\rho = \sum_{0}^M \rho_j$. For the kinetic energy, we use an IMS like formula (see~\cite[Theorem 3.2]{cycon})
    \[
    [\bsigma \cdot (\bp + \bA)]^2 = \sum_{j=0}^M  \chi_j \left[  \bsigma \cdot (\bp + \bA)\right]^2  \chi_j - \sum_{j=0}^M | \nabla \chi_j |^2.
    \]
    We deduce that
    \begin{align*}
    \sum_{j=0}^M \Tr \left( [\bsigma \cdot (\bp + \bA)]^2  \gamma_j  \right) & = \Tr \left( [\bsigma \cdot (\bp + \bA)]^2 \gamma \right) + \int_{\R^3} \left( \sum_{j=0}^m | \nabla \chi_j |^2 \right) \rho \\
    & \le \Tr \left( [\bsigma \cdot (\bp + \bA)]^2 \gamma \right) + CN,
    \end{align*}
    where we used the fact that the functions $| \nabla \chi_j |$ are bounded and $\int_{\R^3} \rho = N$ for the last inequality.
    We now drop the positive $j=0$ term on the left, and we localise the $\bA$ field. Since $(1 - \chi_j')\chi_j = 0$ for all $1 \le j \le M$, we have 
    $$ \Tr \left( [\bsigma \cdot (\bp + \bA)]^2 \gamma_j \right) =  \Tr \left( [\bsigma \cdot (\bp + \bA_j)]^2 \gamma_j \right). $$
    Altogether, we get
    \[
    \sum_{j=1}^M \Tr \left( [\bsigma \cdot (\bp + \bA_j)]^2 \gamma_j \right) \le \Tr \left( [\bsigma \cdot (\bp + \bA)]^2  \gamma  \right) + C.
    \]
    We now consider the Hartree term. Since $\rho = \sum_{j=0}^M \rho_j$ together with the fact that $D(f,g)$ is positive for $f, g \ge 0$, we directly get
    \[
    \sum_{j=1}^M D(  \rho_j ,  \rho_j ) \le  \sum_{i,j=0}^M D( \rho_i,  \rho_j ) = D(\rho, \rho).
    \]
    Finally, for the potential energy term, we have
    \begin{align} 
         \int_{\R^3} V \rho & = \int_{\R^3} \left( - \sum_{i=1}^M \dfrac{z_i}{| \bx - \bR_i |} \right) \left( \sum_{j=0}^M \rho_{j}(\bx) \right) \rd \bx  \nonumber \\
        &  = \sum_{j=1}^M - z_j \int_{\R^3} \dfrac{\rho_j(\bx)}{| \bx - \bR_j |}  \rd \bx + O(1), \label{eq:split_potential}
    \end{align}
    where, for the last equality, we expanded all the terms and used the fact that, for $i \neq j$, the support of $\rho_{j}$ avoids the singularity in $\bR_i$, so that the corresponding term is uniformly bounded by some $C N$.
\end{proof}

We localise the sequences $\widetilde{\gamma_{n}}$ and $ \widetilde{\bA_{n}}$ as in Lemma~\ref{lem:localisation}, and denote by 
$$\widetilde{\gamma_{j,n}} = \chi_j \widetilde{\gamma_n} \chi_j\quad \text{and}\quad\widetilde{\bA_{j,n}} := \chi_j' \widetilde{\bA_n}.$$
Since $\cE_\alpha(\widetilde{\gamma_n}, \widetilde{\bA_n})$ goes to $-\infty$, we have $\cE_\alpha(\widetilde{\gamma_n}, \widetilde{\bA_n}) + C < 0$ for $n$ large enough. So, by Lemma~\ref{lem:localisation},
\begin{align} 
    & \sum_{j=1}^M \left( \Tr \left( [\bsigma \cdot (\bp + \widetilde{\bA_{j,n}})]^2  \widetilde{\gamma_{j,n}}  \right)
    + D( \widetilde{\rho_{j,n}} , \widetilde{\rho_{j,n}} ) \right)
     + \dfrac{1}{8 \pi \alpha^2} \int_{\R^3} \widetilde{\bB_n}^2 \nonumber \\
     & \qquad \le
     \sum_{j=1}^M z_j \int_{\R^3} \dfrac{\widetilde{\rho_{j,n}}}{| \bx - \bR_j |}. \label{eq:bound_gammajn}
\end{align}
On the other hand, since the only negative term in the expression of $\cE_\alpha$ is the $\int_{\RR^3} V \rho$ term, we have $\int_{\RR^3 } V \widetilde{\rho_n} \to - \infty$. Together with~\eqref{eq:split_potential}, we get
\[
    \lambda_n := \sum_{j=1}^M z_j \int_{\R^3} \dfrac{\widetilde{\rho_{j,n}}(\bx)}{| \bx - \bR_j |} \rd \bx \xrightarrow[n \to \infty]{} +\infty.
\]
 \underline{Re-scaling.} We rescale the one-body density matrices and potential vectors with $\lambda_n \to \infty$, and we set, for $1 \le j \le M$,
\begin{align*}
    \gamma_{j,n}(\bx, \by) & := \dfrac{1}{\lambda_n^3} \widetilde{\gamma_{j,n}} \left(\dfrac{\bx}{\lambda_n}  + \bR_j, \dfrac{\by}{\lambda_n}  + \bR_j \right) \\ 
    \bA_{j,n} (\bx) & := \dfrac{1}{\lambda_n} \widetilde{\bA_{j,n}} \left( \dfrac{\bx}{\lambda_n} + \bR_j \right).
\end{align*}
The translations are chosen so that all $\gamma_{j,n}$ and $\bA_n$ are localised near $\bx = \bnull$, and the scaling is chosen so that 
\begin{equation} \label{eq:scaling_lambdan}
  \forall n \in \N, \quad  
  \sum_{j=1}^M z_j \int_{\R^3} \dfrac{\rho_{j,n}(\bx)}{ | \bx |} \rd \bx = 
  \dfrac{1}{\lambda_n} \sum_{j=1}^M z_j \int_{\R^3} \dfrac{\widetilde{\rho_{j,n}}(\by)}{ | \by - \bR_j |} \rd \by = 1.
\end{equation}
Our goal is to prove that each sequence $(\gamma_{j,n}, \bA_{j,n})$ converges, in an appropriate sense, to a zero-mode (potentially the null one). With the same computation as in~\eqref{eq:scaling1}-\eqref{eq:scaling2}, together with~\eqref{eq:bound_gammajn}, we get
\begin{align} \label{eq:bound_gammajn_1}
& \lambda_n \sum_{j=1}^M  \Tr \left( [\bsigma \cdot (\bp + \bA_{j,n})]^2  \gamma_{j,n}  \right)
+ \frac12 \sum_{j=1}^M D( \rho_{j,n} , \rho_{j,n} ) \nonumber \\
& \qquad \qquad \qquad + \dfrac{1}{8 \pi \alpha^2} \dfrac{1}{\lambda_n} \int_{\R^3} \widetilde{\bB_n}^2 
\le
\sum_{j=1}^M z_j \int_{\R^3} \dfrac{\rho_{j,n}(\bx)}{| \bx |}\rd \bx \quad (= 1).
\end{align}

\noindent \underline{Compactness.} We will use~\eqref{eq:bound_gammajn_1} in several ways. First, we notice that all terms in the left-hand side are positive, hence bounded. Let us prove that the functions $\bB_{j,n} := \curl \bA_{j,n}$ are bounded in $L^2(\R^3)$. Recall that $\widetilde{\bA_{j,n}} = \chi_j' \widetilde{\bA_n}$, so that $\widetilde{\bB_{j,n}} = \nabla \chi_j' \wedge \widetilde{\bA_n} + \chi_j' \widetilde{\bB_n}$.
We have by scaling and the inequality $(a+b)^2 \le 2 a^2 + 2b^2$ that
$$
 \int_{\R^3} | \bB_{j,n} |^2 := \dfrac{1}{\lambda_n} \int_{\R^3} | \widetilde{\bB_{j,n}} |^2
 \le \dfrac{2}{\lambda_n} \int_{\R^3} \left| \nabla \chi_j' \right|^2 | \widetilde{\bA_n} |^2 + 
 \dfrac{2}{\lambda_n} \int_{\R^3} \left| \chi_j' \right|^2 | \widetilde{\bB_n} |^2.
$$
Since $| \chi_j' | \in L^\infty$ and $\lambda_n^{-1} \| \widetilde{\bB_n}^2 \|_{L^2}$ is bounded by~\eqref{eq:bound_gammajn_1}, the second term is uniformly bounded. Similarly, $\left| \nabla \chi_j' \right|^2 $ is in $L^{3/2}(\R^3)$, while for $\bA\in H^1_{\div}$, we have with the Sobolev embedding $\| \bA \|_{L^6} \le C \| \nabla \bA \|_{L^2} = C \| \bB \|_{L^2}$. We deduce from Hölder's inequality that the first term is also uniformly bounded. So for all $0 \le j \le N$, $(\bB_{j,n})$ is bounded in $L^2(\R^3)$.

We now use the following Lemma, whose proof is postponed until the end of the section. In the sequel, we denote by $\fS_p$ the Schatten space of operators acting on $L^2(\R^3, \C^2)$: $\fS_1$ is the set of trace-class operators, and $\fS_2$ is the set of Hilbert-Schmidt operators.

\begin{lemma}[A compactness result]\label{lem:compactness}
 Let $(\gamma_n, \bA_n)$ be any sequence in $\cP\times H^1_{\div}$ such that
 $$
 \Tr\bra{ [\bsigma\cdot \bra{\bp+\bA_n} ]^2\gamma_n} + \| \bB_n \|_{L^2} + \Tr(\gamma_n) < C
 $$
Then there exists $(\gamma,\bA)\in \cP\times H^1_{\div}$ such that, up to a subsequence, we have the following convergences:
 \begin{itemize}
  \item $\bA_n\to \bA$ weakly in $L^6(\R^3)$ and strongly in $L^p_\loc(\R^3)$ for all $1\leq p< 6$;
  \item $\bB_n\to \bB := \curl \bA$ weakly in $L^2(\R^3)$;
  \item $\gamma_n \to \gamma$ weakly-* in $\fS_1$, and $(1 - \Delta)^{1/2} \gamma_n \to (1 - \Delta)^{1/2} \gamma$ weakly in $\fS_2$;
  \item $\rho_{\gamma_n}\to \rho_\gamma$ weakly in $L^1(\R^3) \cap L^3(\R^3)$ and strongly in $L^p_\loc(\R^3)$ for all $1 \le p < 3$;
  \item $\bsigma \cdot \bra{\bp+\bA_n}\gamma_n\to \bsigma \cdot \bra{\bp+\bA}\gamma $ weakly in $\fS_2$.
  \end{itemize}
In addition, we have the following inequalities:
\begin{align*}
    \int_{\R^3} \rho_\gamma & \le \liminf_{n \to \infty} \int_{\R^3} \rho_{\gamma_n}, \\
    \Tr\bra{ [\bsigma\cdot \bra{\bp+\bA}]^2\gamma} & \leq \liminf_{n\to +\infty} \Tr\bra{ [\bsigma\cdot \bra{\bp+\bA_n}]^2\gamma_n}, \\
    D(\rho_{\gamma},\rho_\gamma) & \leq \liminf_{n\to +\infty} D(\rho_{\gamma_n},\rho_{\gamma_n}), \\
    \int_{\RR^3} V\rho_\gamma & = \lim_{n\to +\infty} \int_{\RR^3} V\rho_{\gamma_n}, \\ \text{and } \int_{\RR^3}  | \cdot |^{-1}\rho_\gamma &= \int_{\RR^3} | \cdot |^{-1} \rho_{\gamma_n}.
\end{align*}

\end{lemma}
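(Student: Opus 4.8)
The plan is to convert the hypothesis into enough compactness to extract weak limits of every object, identify those limits, and then prove the (semi)continuity statements one at a time; the only genuinely delicate points are the a priori bound on the \emph{non-magnetic} kinetic energy and the lower semicontinuity of the magnetic one. \emph{A priori bounds.} First I would upgrade the Pauli kinetic bound into an ordinary kinetic bound, which is the one place the $L^2$-bound on $\bB_n$ is used. The Pauli identity $[\bsigma\cdot(\bp+\bA)]^2=(\bp+\bA)^2+\bsigma\cdot\bB$ together with the matrix inequality $|\bsigma\cdot\bB|\le|\bB|$ gives $[\bsigma\cdot(\bp+\bA)]^2\ge(\bp+\bA)^2-|\bB|$; combining the diamagnetic inequality $\norm{\nabla|u|}_{L^2}^2\le\langle u,(\bp+\bA)^2u\rangle$ with a Gagliardo--Nirenberg estimate shows that $|\bB|$ is infinitesimally form-bounded relative to $(\bp+\bA)^2$ with a constant depending only on $\norm{\bB}_{L^2}$, hence $[\bsigma\cdot(\bp+\bA_n)]^2\ge\frac12(\bp+\bA_n)^2-C\norm{\bB_n}_{L^2}^4$ as quadratic forms, and testing against $\gamma_n$ (using $\Tr(\gamma_n)\le C$) yields $\Tr((\bp+\bA_n)^2\gamma_n)\le C$. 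The Hoffmann--Ostenhof inequality for density matrices $\norm{\nabla\sqrt{\rho_\gamma}}_{L^2}^2\le\Tr((\bp+\bA)^2\gamma)$ (a pointwise Cauchy--Schwarz bound on the spectral decomposition combined with the diamagnetic inequality) then bounds $\sqrt{\rho_{\gamma_n}}$ in $H^1$, so $\rho_{\gamma_n}$ is bounded in $L^1\cap L^3$; since $\div\bA_n=0$ gives $\norm{\bA_n}_{L^6}\le C\norm{\bB_n}_{L^2}$ by Sobolev, this bounds $\int|\bA_n|^2\rho_{\gamma_n}$ and therefore $\Tr((1-\Delta)\gamma_n)\le C$.

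\emph{Extraction and identification.} With these bounds the extractions are standard: up to a subsequence, $\bA_n\rightharpoonup\bA$ in $L^6$, $\bB_n\rightharpoonup\bB=\curl\bA$ in $L^2$, $\bA_n\to\bA$ strongly in $L^p_\loc$ for $p<6$, and $\bA\in H^1_{\div}$ (the Coulomb gauge and the $L^2$-bound pass to the limit); $\gamma_n\rightharpoonup^{\ast}\gamma$ in $\fS_1$ with $0\le\gamma\le1$; and, using $\gamma_n^2\le\gamma_n$, the operators $(1-\Delta)^{1/2}\gamma_n$ and $(1-\Delta)^{1/2}\gamma_n(1-\Delta)^{1/2}$ are bounded in $\fS_2$ and $\fS_1$, with limits $(1-\Delta)^{1/2}\gamma$ and $(1-\Delta)^{1/2}\gamma(1-\Delta)^{1/2}$ (identified by pairing with rank-one operators). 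For the density, $\sqrt{\rho_{\gamma_n}}$ converges weakly in $H^1$ and strongly in $L^q_\loc$, $q<6$; to see its limit is $\sqrt{\rho_\gamma}$ I would write $\int\phi^2\rho_{\gamma_n}=\Tr\big((1-\Delta)^{-1/2}\phi^2(1-\Delta)^{-1/2}\cdot(1-\Delta)^{1/2}\gamma_n(1-\Delta)^{1/2}\big)$ for $\phi\in C_c^\infty$ and use that $(1-\Delta)^{-1/2}\phi^2(1-\Delta)^{-1/2}$ is compact (Kato--Seiler--Simon: $\phi(1-\Delta)^{-1/2}\in\fS_q$ for $q>3$), so the trace passes to the weak-$\fS_1$ limit; this gives the stated convergences of $\rho_{\gamma_n}$. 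For the magnetic current, split $\bsigma\cdot(\bp+\bA_n)\gamma_n=\big(\bsigma\cdot\bp(1-\Delta)^{-1/2}\big)\cdot(1-\Delta)^{1/2}\gamma_n+\bsigma\cdot\bA_n\gamma_n$: the first term converges weakly in $\fS_2$ (left multiplication of a weakly $\fS_2$-convergent sequence by a fixed bounded operator), and the second is bounded in $\fS_2$ since $\norm{\bsigma\cdot\bA_n\gamma_n}_{\fS_2}^2=\Tr(|\bA_n|^2\gamma_n^2)\le\int|\bA_n|^2\rho_{\gamma_n}$, its limit being identified by testing against Hilbert--Schmidt operators with smooth compactly supported kernels and using strong $L^2_\loc$ convergence of $\bA_n$. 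Hence $\bsigma\cdot(\bp+\bA_n)\gamma_n\rightharpoonup\bsigma\cdot(\bp+\bA)\gamma$ in $\fS_2$.

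\emph{The easy inequalities.} The bound $\int\rho_\gamma\le\liminf\int\rho_{\gamma_n}$ is immediate from strong $L^1_\loc$ convergence. Lower semicontinuity of $D(\rho_\gamma,\rho_\gamma)$ follows because $\rho\mapsto D(\rho,\rho)^{1/2}$ is a seminorm, finite and continuous on $L^{6/5}$ by Hardy--Littlewood--Sobolev, hence convex and weakly lower semicontinuous there, and $\rho_{\gamma_n}\rightharpoonup\rho_\gamma$ in $L^{6/5}$. For the potential, split each Coulomb term $z_j|\bx-\bR_j|^{-1}$ into its restriction to a ball $\cB(\bR_j,R)$, which lies in $L^{p'}$ for some $p'<3$ and pairs with the strong $L^p_\loc$ limit of $\rho_{\gamma_n}$, and its tail, which is bounded uniformly by $R^{-1}\norm{\rho_{\gamma_n}}_{L^1}\to0$ as $R\to\infty$; this gives $\int V\rho_{\gamma_n}\to\int V\rho_\gamma$ and similarly $\int|\bx|^{-1}\rho_{\gamma_n}\to\int|\bx|^{-1}\rho_\gamma$.

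\emph{The main obstacle: lower semicontinuity of the magnetic kinetic energy.} I expect this to be the delicate step, because the natural Hilbert--Schmidt factorisation $\Tr([\bsigma\cdot(\bp+\bA_n)]^2\gamma_n)=\norm{\bsigma\cdot(\bp+\bA_n)\sqrt{\gamma_n}}_{\fS_2}^2$ involves $\sqrt{\gamma_n}$, whose weak limit need not equal $\sqrt{\gamma}$, so one cannot simply pass to the limit in the factorisation. To bypass this I would fix an orthonormal basis $(e_k)_{k\ge1}$ of $L^2(\R^3,\C^2)$ with each $e_k\in C_c^\infty$ (Gram--Schmidt of a countable dense subset). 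Writing $H_n:=\bsigma\cdot(\bp+\bA_n)$, the operator $H_n\gamma_nH_n$ is positive and trace-class with trace $\Tr([\bsigma\cdot(\bp+\bA_n)]^2\gamma_n)$, so for every $R$ one has $\Tr([\bsigma\cdot(\bp+\bA_n)]^2\gamma_n)\ge\sum_{k=1}^R\langle H_ne_k,\gamma_nH_ne_k\rangle$. Since $e_k\in C_c^\infty$, $H_ne_k\to\bsigma\cdot(\bp+\bA)e_k$ strongly in $L^2$ (using $\bA_n\to\bA$ in $L^2_\loc$), while $\gamma_n\rightharpoonup\gamma$ weakly as bounded operators with $\norm{\gamma_n}_{\mathrm{op}}\le1$, so $\langle H_ne_k,\gamma_nH_ne_k\rangle\to\langle\bsigma\cdot(\bp+\bA)e_k,\gamma\,\bsigma\cdot(\bp+\bA)e_k\rangle$; letting $n\to\infty$ and then $R\to\infty$ gives $\Tr([\bsigma\cdot(\bp+\bA)]^2\gamma)\le\liminf_n\Tr([\bsigma\cdot(\bp+\bA_n)]^2\gamma_n)$, and since the partial sums are bounded by this finite liminf, the limiting kinetic energy is finite and all manipulations with it (in particular the decomposition $\bsigma\cdot(\bp+\bA)\gamma=\bsigma\cdot\bp\gamma+\bsigma\cdot\bA\gamma$ used above) are legitimate.
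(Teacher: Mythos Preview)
Your proof is correct and follows essentially the same architecture as the paper's: bound the non-magnetic kinetic energy, extract weak limits in the natural spaces, identify the density and the operator $\bsigma\cdot(\bp+\bA_n)\gamma_n$ via the Kato--Seiler--Simon compactness trick, and finish with the (semi)continuity statements. Two points of execution differ slightly. For the a priori bound on $\Tr((\bp+\bA_n)^2\gamma_n)$, the paper does a two-step bootstrap (first bound $\|\rho_n\|_{L^2}$ via the inequality $C_2X^{2/3}-\|\bB\|_{L^2}X^{1/2}\to+\infty$, then bound $\|\nabla\sqrt{\rho_n}\|_{L^2}$, then $\Tr(-\Delta\gamma_n)$ via $-\Delta\le 2(\bp+\bA)^2+2|\bA|^2$), whereas you go directly through the relative form-bound of $|\bB|$ with respect to $(\bp+\bA)^2$; both are fine and yours is arguably cleaner. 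For the lower semicontinuity of the magnetic kinetic energy, the paper simply asserts it as ``lower semicontinuity of the functionals'', while you supply an explicit argument via a $C_c^\infty$ orthonormal basis and the finite-rank truncation $\sum_{k\le R}\langle H_ne_k,\gamma_n H_ne_k\rangle$; this is a genuine addition, and your argument is correct (the key being $\|\gamma_n\|_{\mathrm{op}}\le 1$ so that $\langle f_n,\gamma_n f_n\rangle\to\langle f,\gamma f\rangle$ when $f_n\to f$ strongly and $\gamma_n\to\gamma$ in the weak operator topology).
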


We use Lemma~\ref{lem:compactness} for each one of the sequences $(\gamma_{j,n},\bA_{j,n})$, and deduce that there exist states $(\gamma_j^*,\bA_j^*)$ satisfying the conditions of the lemma. In particular, we have $\bsigma \cdot \bra{p+\bA_{j,n}}\gamma_{j,n} \to \bsigma \cdot \bra{p+\bA_j^*}\gamma_j^*$, while, from~\eqref{eq:bound_gammajn_1} we have (we use that $\gamma_{j,n}^2 \le \gamma_{j,n}$) 
\begin{align*}
    \| \bsigma \cdot \bra{\bp+\bA_{j,n}}\gamma_{j,n} \|_{\fS_2}^2 & 
    = \Tr \left( \bsigma \cdot \bra{\bp+\bA_{j,n}}\gamma_{j,n}^2 \bsigma \cdot \bra{\bp+\bA_{j,n}} \right) \\
    & \le \Tr \left( [\bsigma \cdot \bra{\bp+\bA_{j,n}} ]^2 \gamma_{j,n} \right) \xrightarrow[n \to \infty]{} 0.
\end{align*}
We deduce that $\bsigma \cdot \bra{p+\bA_j^*}\gamma_j^* = 0$. In other words, for all $0 \le j \le M$, $(\gamma_j^*, \bA_j^*)$ is a zero-mode. However, we may have $\gamma_j^* = 0$. To prove that at least one of the limits is a non trivial zero mode, we use~\eqref{eq:scaling_lambdan} and pass to the limit $n \to \infty$. This gives
\begin{equation} \label{eq:limit_rhoj_over_x}
    \sum_{j=1}^N z_j \int_{\R^3} \dfrac{\rho_j^*(\bx)}{| \bx |} \rd \bx = 1.
\end{equation}
Hence, there is at least one $1 \le j \le M$ with $\rho_j^* \neq 0$, and the corresponding $(\gamma_j^*, \bA_j^*)$ is a non trivial zero-mode.

\medskip

\noindent \underline{Final inequality.}
It remains to use one of these non-trivial zero-modes to prove that $\alpha \ge \alpha_c(z,N)$. To do so, we use again~\eqref{eq:bound_gammajn_1}. We first need to split $\|  \widetilde{\bB_n} \|_{L^2}^2$ into a sum $ \sum_{j=1}^M \|  \bB_j \|_{L^2}^2$. This is possible at the limit. Indeed, by definition of $\bB_{j,n}$, we have
\begin{equation}\label{eq:Bjn}
\bB_{j,n}(\bx)=\frac{1}{\lambda_n^2}\bra{ \nabla \chi_j' \wedge \widetilde{\bA_{n}} + \chi_j' \widetilde{\bB_{n}} }\bra{\frac{\bx}{\lambda_n} + \bR_j}
\end{equation}
We already proved that both terms of~\eqref{eq:Bjn} are uniformly bounded in $L^2(\R^3)$. In addition, the support of the first term goes to infinity, hence the first term converge weakly to $0$ in $L^2(\R^3)$. Taking the weak-limit in~\eqref{eq:Bjn}, we obtain that $\lambda_n^{-2}  \chi_j'  \widetilde{\bB_{n}} \left( \lambda_n^{-1} \cdot + \bR_j\right)$ converges weakly to $\bB_j^*$ in $L^2(\R^3)$. In particular, 
\[
    \int_{\R^3} | \bB_j^* |^2 \le \liminf_{n \to \infty} \int_{\R^3}  \left( \dfrac{1}{\lambda_n^2}  \chi_j'  \widetilde{\bB_{n}} \left( \dfrac{\bx}{\lambda_n} +\bR_j \right)  \right)^2 \rd \bx
    = \liminf_{n \to \infty} \dfrac{1}{\lambda_n } \int_{\R^3} (\chi_j')^2| \widetilde{\bB_n} |^2.
\]
Summing these inequalities over $j$, and using the fact that $\sum_{j=1}^M (\chi_j')^2 \le 1$, we obtain
\begin{equation} \label{eq:sumBj^2}
    \sum_{j=1}^M \int_{\R^3} | \bB_j^* |^2 \le  \liminf_{n \to \infty} \dfrac{1}{\lambda_n } \int_{\R^3} | \widetilde{\bB_n} |^2.
\end{equation}
Using this together with~\eqref{eq:limit_rhoj_over_x} and Lemma~\ref{lem:compactness}, dropping the kinetic energy in~\eqref{eq:bound_gammajn_1} and passing to the limit, we obtain
\begin{equation} \label{eq:sum_zeromodes}
    \sum_{j=1}^M \left[ \frac12 D(\rho_j^*, \rho_j^*) + \dfrac{1}{8 \pi \alpha^2} \int_{\R^3} | \bB_j^* |^2 - z_j \int_{\R^3} \dfrac{\rho_j^*(\bx)}{| \bx | }\rd \bx \right] \le 0.
\end{equation}
By first restricting the sum in~\eqref{eq:sum_zeromodes} to all non-null zero-modes, then by noticing that if the sum of terms is negative, then at least one term is negative, we deduce that there is $1 \le j \le M$ such that $(\gamma_j^*, \bA_j^*)$ is a non-null zero-mode, with
\[
    D(\rho_j^*, \rho_j^*) + \dfrac{1}{8 \pi \alpha^2} \int_{\R^3} | \bB_j^* |^2 - z_j \int_{\R^3} \dfrac{\rho_j^*(\bx)}{| \bx | } \rd \bx \le 0.
\]
Since $\gamma_j^* \neq 0$ and $(\gamma_j^*, \bA^*)$ is a zero-mode, we must have $\bA_j^* \neq \bnull$ and $\bB_j^* \neq \bnull$. By performing a dilation (see~\eqref{eq:dilation}), we may always assume that $\| \bB_j^* \| = 1$.  Setting $\beta := \dfrac{-1}{8 \pi \alpha^2}$, this leads to
\[
    \beta \ge \frac12 D(\rho_j^*, \rho_j^*) - z_j \int_{\R^3} \dfrac{\rho_j^*(\bx)}{| \bx | } \rd \bx \quad \ge \beta_c(z_j, N) \ge \beta_c(z,N),
\]
where the last inequality come from the fact that $z \mapsto \beta_c(z,N)$ is non-increasing by Proposition~\ref{prop:betac}. This proves that $\alpha \ge \alpha_c(z,N)$ as claimed. 

\subsection*{Proof of Lemma~\ref{lem:compactness}}
It remains to prove Lemma~\ref{lem:compactness}. Let $(\gamma_n, \bA_n)$ be a sequence in $\cP\times H^1_{\div}$ such that
$$
\Tr\bra{ [\bsigma\cdot \bra{\bp+\bA_n} ]^2\gamma_n} + \| \bB_n \|_{L^2} + \Tr(\gamma_n) < C.
$$
\underline{Convergence of the magnetic fields.} As $(\bB_n)_{n\in\NN}$ is bounded, it converges, up to a subsequence, to some $\bB^*$ weakly in $L^2(\R^3)$. Since $\bA_n \in H^1_{\div}$, we get from the Sobolev embeddings that $ \| \bA_{n} \|_{L^6} \le C \| \nabla \bA_{n} \|_{L^2} = C \| \bB_{n} \|_{L^2}$, so the sequence $(\bA_{n})$ is bounded in $L^6(\RR^3)$, and converges to some $\bA^*$ weakly in $L^6(\RR^3)$. In distributional sense, we get that $\div \bA^* = 0$, and by identification of the limits, we also have $\curl \bA^* = \bB^* \in L^2(\R^3)$, hence $\bA^* \in H^1_{\div}$.

\medskip

\noindent \underline{Convergence for the density.} Let $\rho_n := \rho_{\gamma_n}$. Since $0 \le \gamma_n \le 1$ and $\Tr(\gamma_n)$ is bounded, we deduce that $(\rho_n)$ is bounded in $L^1(\R^3)$. Actually, we may bound $\rho_n$ in stronger topologies. First, for all $(\gamma, \bA) \in \cP \times H^1_{\div}$, we have
\begin{equation} \label{eq:expand_kinetic}
\Tr \left( \left[ \bsigma \cdot ( \bp + \bA) \right]^2  \gamma  \right) = \Tr \left( \left( \bp + \bA \right)^2 \gamma \right) + \int_{\R^3} \bB \cdot \bm_\gamma,
\end{equation}
where $\bm_\gamma (\bx) := \tr_{\C^2} \bra{\bsigma \cdot \gamma(\bx, \bx)}$ it the magnetisation, and satisfies $| \bm_\gamma |^2 = \rho_\gamma^2$. We recall the following classical inequalities

\begin{lemma} \label{lem:useful_ineq}
    For all $\gamma \in \cP$ and all $\bA \in H^1_{\div}$ such that $\Tr(\gamma) < \infty$, and $\Tr((\bp + \bA)^2 \gamma) < \infty$, the density $\rho_\gamma$ satisfies $\sqrt{\rho} \in H^1(\R^3)$, and the following inequalities hold:
    \[
    C_\LT \int_{\R^3} \rho_\gamma^{5/3} \le  \Tr(( \bp + \bA )^2 \gamma) \qquad \text{(Lieb-Thirring)}
    \]
    and
    \[
    \int_{\R^3} | \nabla \sqrt{\rho_\gamma} |^2 \le \Tr( (\bp + \bA )^2 \gamma) \qquad \text{(Hoffman-Ostenhof)}.
    \]
    In particular, together with Sobolev inequality, there is $C_2 \in \R^+$ such that
    \[
       C_2  \left( \int_{\R^3} \rho_{\gamma}^{2} \right)^{2/3}\le  \Tr((\bp + \bA )^2 {\gamma}).
    \]
\end{lemma}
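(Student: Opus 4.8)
All three bounds should be obtained from one-body estimates together with the diamagnetic inequality. Write the spectral decomposition $\gamma = \sum_{i\ge 1}\lambda_i\, |u_i\rangle\langle u_i|$ with $0\le\lambda_i\le1$, $(u_i)$ orthonormal, and $\sum_i\lambda_i=\Tr(\gamma)<\infty$. By assumption $\sum_i\lambda_i\|(\bp+\bA)u_i\|_{L^2}^2=\Tr((\bp+\bA)^2\gamma)<\infty$, so every $u_i$ with $\lambda_i>0$ lies in the magnetic Sobolev space, and $\rho_\gamma=\sum_i\lambda_i|u_i|^2$ converges in $L^1(\R^3)$. The plan is: (1) prove Hoffmann--Ostenhof via the pointwise diamagnetic inequality plus a Cauchy--Schwarz/convexity argument; (2) invoke the magnetic Lieb--Thirring inequality; (3) deduce the last inequality by Sobolev embedding and Hölder interpolation.

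\textbf{Hoffmann--Ostenhof.} For each $i$ with $\lambda_i>0$, the pointwise diamagnetic inequality gives $|u_i|\in H^1(\R^3)$ with $\bigl|\nabla|u_i|\bigr|\le |(\bp+\bA)u_i|$ a.e. To transfer this to $\rho_\gamma$, I would truncate: set $\rho^{(N_0)}:=\sum_{i\le N_0}\lambda_i|u_i|^2$. On $\{\rho^{(N_0)}>0\}$ one has $\nabla\sqrt{\rho^{(N_0)}}=\bigl(\sum_{i\le N_0}\lambda_i|u_i|\,\nabla|u_i|\bigr)/\sqrt{\rho^{(N_0)}}$, and Cauchy--Schwarz in the index $i$ yields $\bigl|\nabla\sqrt{\rho^{(N_0)}}\bigr|^2\le\sum_{i\le N_0}\lambda_i\bigl|\nabla|u_i|\bigr|^2\le\sum_{i\le N_0}\lambda_i|(\bp+\bA)u_i|^2$, while $\nabla\sqrt{\rho^{(N_0)}}=0$ a.e. on $\{\rho^{(N_0)}=0\}$. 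Integrating and letting $N_0\to\infty$ (monotone convergence $\rho^{(N_0)}\uparrow\rho_\gamma$ together with weak lower semicontinuity of the Dirichlet energy) gives $\sqrt{\rho_\gamma}\in H^1(\R^3)$ with $\int_{\R^3}|\nabla\sqrt{\rho_\gamma}|^2\le\sum_i\lambda_i\|(\bp+\bA)u_i\|_{L^2}^2=\Tr((\bp+\bA)^2\gamma)$.

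\textbf{Lieb--Thirring.} Here I would use the classical fact that the Lieb--Thirring eigenvalue inequality for $(\bp+\bA)^2+W$ holds with the same constant as in the non-magnetic case, as a consequence of the diamagnetic inequality; dualising (Legendre transform in $W$) produces the kinetic form $\Tr((\bp+\bA)^2\gamma)\ge C_\LT\int_{\R^3}\rho_\gamma^{5/3}$ for any $0\le\gamma\le1$. This is the only non-elementary input, and I would simply cite it.

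\textbf{Last inequality and main obstacle.} From the Hoffmann--Ostenhof bound, $\sqrt{\rho_\gamma}\in H^1(\R^3)$, so by the Sobolev embedding $H^1(\R^3)\hookrightarrow L^6(\R^3)$ we get $\rho_\gamma\in L^3(\R^3)$ with $\|\rho_\gamma\|_{L^3}\le S_3\|\nabla\sqrt{\rho_\gamma}\|_{L^2}^2\le S_3\,\Tr((\bp+\bA)^2\gamma)$. Combining with Lieb--Thirring through Hölder's interpolation $\int_{\R^3}\rho_\gamma^2\le\bigl(\int_{\R^3}\rho_\gamma^{5/3}\bigr)^{3/4}\bigl(\int_{\R^3}\rho_\gamma^{3}\bigr)^{1/4}$ gives $\int_{\R^3}\rho_\gamma^2\le C_\LT^{-3/4}S_3^{3/4}\,\Tr((\bp+\bA)^2\gamma)^{3/2}$, hence $\bigl(\int_{\R^3}\rho_\gamma^2\bigr)^{2/3}\le C_2\,\Tr((\bp+\bA)^2\gamma)$ with $C_2$ depending only on $C_\LT$ and $S_3$ (alternatively, interpolating $L^2$ between $L^1$ and $L^3$ gives the same with a constant depending on $\Tr(\gamma)$, which suffices for the applications). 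I expect the only genuinely delicate point to be the bookkeeping in the Hoffmann--Ostenhof step (infinite-rank $\gamma$, behaviour on $\{\rho_\gamma=0\}$, and justifying the termwise manipulations and limit), the Lieb--Thirring ingredient being black-boxed.
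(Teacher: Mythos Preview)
Your proposal is correct and follows essentially the same route as the paper: the first two inequalities are cited from the literature (the paper refers to \cite[Lemma~4.3 and Lemma~8.4]{stability}, while you sketch the standard diamagnetic/Cauchy--Schwarz argument for Hoffmann--Ostenhof and black-box the magnetic Lieb--Thirring), and the third is obtained exactly as the paper indicates, by combining Sobolev on $\sqrt{\rho_\gamma}$ with H\"older interpolation of $L^2$ between $L^{5/3}$ and $L^3$. Your derivation of the universal constant $C_2$ is in fact slightly more explicit than what the paper writes.
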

The proof can be found for instance in~\cite[Lemma 4.3 and Lemma 8.4]{stability} for the two first inequalities. The last one is obtained by Hölder's inequality.

From~\eqref{eq:expand_kinetic} and Lemma~\ref{lem:useful_ineq} together with Cauchy-Schwarz inequality, we deduce that
\[
    \Tr \left( \left[ \bsigma \cdot ( \bp + \bA) \right]^2  \gamma  \right) \ge C_2 \left( \int_{\R^3} \rho_\gamma^2 \right)^{2/3} - \| \bB \|_{L^2} \left( \int_{\R^3} \rho_\gamma^2 \right)^{1/2}.
\]
In particular, since the sequence $\| \bB_{n} \|_{L^2}^2$ is bounded, and the fact that $C_2 X^{2/3} - B X^{1/2}$ goes to infinity as $X \to \infty$, we deduce that $(\rho_{n})$ is bounded in $L^2(\R^3)$. 
Using again~\eqref{eq:expand_kinetic} and Lemma~\ref{lem:useful_ineq}, we also have
\[
     \Tr \left( \left[ \bsigma \cdot ( \bp + \bA_n) \right]^2  \gamma_{n}  \right) \ge \int_{\R^3} | \nabla \sqrt{\rho_{n}} |^2 - \| \bB_n \|_{L^2} \left( \int_{\R^3} \rho_{n}^2 \right)^{1/2},
\]
and $(\nabla \sqrt{\rho_{n}})$ is also bounded in $L^2(\R^3)$. From the Sobolev embedding, this implies that, up to a subsequence, there is $\rho^* \in L^1 \cap L^3$ such that $\rho_n$ converges weakly to $\rho^*$ in $L^1(\R^3) \cap L^3(\R^3)$, and strongly in $L^p_\loc(\R^d)$ for all $1 \le p < 3$.
\begin{remark} \label{rem:rho_zeromode}
    The same argument shows that if $(\gamma, \bA)$ is a zero-mode, then $\rho_\gamma \in L^1(\R^3) \cap L^3(\R^3)$ and $\sqrt{\rho_\gamma} \in H^1(\R^3)$.
\end{remark}

\noindent \underline{Convergence for the one-body density matrices.}
Since $\gamma_n$ is bounded in norm ($0 \le \gamma_n \le 1$), and $\| \gamma_n \|_{\fS_1} = \Tr(\gamma_n) \le C$ is also bounded, we deduce that $\gamma_n$ is bounded in all Schatten spaces $\fS_p$. In particular, up to a subsequence, it converges to some $\gamma^* \in \fS_1$ for the weak operator topology:
\begin{equation} \label{eq:WOT}
    \forall f,g \in L^2(\R^3, \C^2), \quad  \lim_{n \to \infty} \langle f, \gamma_n g \rangle = \langle f, \gamma^* g \rangle.
\end{equation}
Using that $- \Delta = (\bp + \bA - \bA)^2 \le 2 (\bp + \bA)^2 + 2 \bA^2$ together with~\eqref{eq:expand_kinetic} and the Hölder's inequality, we also have
\begin{align*}
    \Tr (- \Delta \gamma_n) & \le 2 \Tr ( (\bp + \bA_n)^2 \gamma_n) + 2 \int_{\R^3} \bA_n^2 \rho_n \\
    & \le 2 \Tr \left( \left[ \bsigma \cdot ( \bp + \bA_n) \right]^2  \gamma_{n}  \right) + 2 \| \bB_n \|_{L^2} \| \rho_n \|_{L^2} + 2 \| \bA_n \|_{L^6}^2 \| \rho_n \|_{L^{3/2}},
\end{align*}
so the sequence $(-\Delta \gamma_n)_{n \in \N}$ is bounded in $\fS_1$. Up to a subsequence, it converges for the weak operator topology to an operator $T$. To prove that $ T = -\Delta \gamma^*$ we consider $f,g \in C^\infty_0(\R^3, \C^2)$ and using~\eqref{eq:WOT} and the fact that $(1 - \Delta) f \in L^2(\R^3)$, we obtain
\begin{align} 
    \langle f,(1+ T) g \rangle & = \lim_{n \to \infty}  \langle f, (1 - \Delta) \gamma_n g \rangle 
    = \lim_{n \to \infty}  \langle (1 - \Delta) f, \gamma_n g \rangle  \nonumber\\
    & = \langle (1 - \Delta) f, \gamma^* g \rangle
    = \langle  f, (1 - \Delta) \gamma^* g \rangle.\label{eq:semiContinuity_WOT}
\end{align}

We now prove that 
$\rho^* = \rho_{\gamma^*}$. For any smooth function $\chi \in C^\infty_0(\R^3)$, we have that the operator $\chi (1 - \Delta)^{-1}$ is compact by the Kato-Simon-Seiler inequality~\cite{trace_ideals}. Together with the fact that $(1 - \Delta) \gamma_n$ converges to $(1 - \Delta) \gamma^*$ for the weak operator topology, we obtain
\begin{align*}
    \int_{\R^3} \chi \rho^* & = \lim_{n \to \infty} \int_{\RR^3} \chi \rho_n = \lim_{n \to \infty} \Tr\bra{\chi (1-\Delta) ^{-1/2} (1-\Delta) ^{1/2}\gamma_n} \\
    & = \Tr\bra{\chi (1-\Delta) ^{-1/2} (1-\Delta) ^{1/2}\gamma} = \int_{\R^3} \chi \rho_{\gamma^*}.
\end{align*}
By identification, this implies that $R^* = R_{\gamma^*}$.

\medskip

\noindent \underline{Convergence of $\bsigma \cdot (\bp + \bA_n) \gamma_n$.} We already proved that the sequence $\bsigma \cdot \bp \gamma_n$ is bounded in $\fS_2$, and converges, up to a subsequence, to $\bsigma \cdot \bp \gamma^* \in \fS_2$ (see~\eqref{eq:semiContinuity_WOT}). In addition, the sequence $\bsigma \cdot \bA_n \gamma_n$ is also bounded in $\fS_2$, hence converges up to a subsequence to some $T \in \fS_2$ for the weak operator topology. To prove that $T = \bsigma \cdot \bA^*\gamma^*$, we take $f, g \in C^\infty_0(\R^3, \C^2)$ and get
\begin{align*}
    \langle f, T g \rangle & = \lim_{n \to \infty} \langle f, \bsigma \cdot \bA_n \gamma_n g \rangle 
    = \lim_{n \to \infty} \int_{\R^3} \left( \bsigma \cdot \bA_n f \right) R_{n} g \\
    & =   \int_{\R^3} \left( \bsigma \cdot \bA^* f \right) R_{\gamma^*} g = \langle f, \bsigma \cdot \bA^* \gamma^* g \rangle,
\end{align*}
where we used the strong convergence of both $R_n$ and $\bA_n$ in $L^2_\loc$, together with the fact that $f$ and $g$ are smooth and compactly supported. Altogether, this proves that $\bsigma \cdot (\bp + \bA_n) \gamma_n$ converges up to a subsequence to $\bsigma \cdot (\bp + \bA^*) \gamma^*$ weakly in $\fS_2$.

\medskip

\noindent \underline{Last inequalities.} Finally, the three first inequalities of Lemma~\ref{lem:compactness} come from the lower semi-continuity of the functionals. For the last one, we have for $A > 0$,
\begin{equation*} 
\left| \int_{\RR^3} \dfrac{(\rho_n - \rho^*)(\bx)}{| \bx - \bR |} \rd \bx \right| = \left| \int_{\cB(0, A)} \dfrac{(\rho_n - \rho^*)(\bx + \bR) }{| \bx |} \rd \bx \right| + 
 \left| \int_{\cB(0, A)^c} \dfrac{(\rho_n - \rho^*)(\bx + \bR)}{| \bx |} \rd \bx \right|.
\end{equation*}
The second term is uniformly bounded by $2N A^{-1}$, hence is arbitrary small as $A \to \infty$. Then, since $\1(| \bx | < A) | \bx |^{-1}$ is in $L^2(\R^3)$ and compactly supported, together with the fact that $\rho_n \to \rho^*$ strongly in $L^2_\loc(\R^3)$, the first term goes to $0$ as $n \to \infty$. This proves that $\int_{\R^3} \rho_n  | \cdot - \bR |^{-1}$ converges to $\int_{\R^3} \rho^*  | \cdot - \bR |^{-1}$, and the proof follows.

\section{Proof of Theorem~\ref{th:existence_minimiser_per} for periodic systems}
\label{sec:proof:(in)stablity_per}
The proof for periodic systems is very similar to the previous ones, thanks to our cut-off functions. Let us highlight some differences. The first main difference is that all cut-off functions need to be periodised, so we set
\[
    \chi_\per = \sum_{\bR \in \Lat} \chi (\cdot - \bR) \quad \text{and} \quad 
    \chi'_\per = \sum_{\bR \in \Lat} \chi'(\cdot - \bR),
\]
where $\chi$ and $\chi'$ are the cut-off functions considered in~\eqref{eq:def:chichi'}. Another difference is that we need to enforce the condition $\VTr(\gamma_\lambda) = N$ for all $\lambda$. To do so, we need to add some electrons far from the singularities. In practice, we take $\gamma_0 \in \cP_{\per}^N$ a smooth one-body matrix with $N$ electrons per unit cell, and so that $\1_{\WS} \gamma_0 \1_{\WS}$ has support away from $\bigcup_j \cB(\bR_j, r/2)$, and we consider test functions of the form
\[
    \gamma_\lambda + \eta_\lambda \gamma_0, \quad \text{with} \quad \eta_\lambda := 1 - \frac1N \Tr( \gamma_\lambda) \in [0,1].
\]
Since $\gamma_0$ is smooth, it only adds a bounded contribution to the total energy. We leave the details for the sake of brevity.
The final difference is that the full space Coulomb kernel $| \bx |^{-1}$ is replaced with its periodic counterpart $G_\Lat$. However, $G_\Lat$ and $| \cdot |^{-1}$ have the same singularity as $\bx \to \bnull$, since
\[
    F(\bx) := G_{\Lat}(\bx) - \dfrac{1}{ | \bx |}
\]
satisfies $\Delta F = 0$ on $\WS$, hence is smooth and bounded on $\WS$.

\section{Proof of Proposition~\ref{prop:betac}: properties of $\beta_c$} 
\label{sec:proof:betac}

In this section, we study the properties of $\beta_c(z,N)$ defined in~\eqref{eq:def:betazN}, and prove Proposition~\ref{prop:betac}. We denote by
\[
    \cF_z(\gamma, \bA) := \left( \frac12 D(\rho_\gamma, \rho_\gamma) - z \int_{\R^3} \dfrac{\rho_\gamma(\bx)}{| \bx |} \right) / \| \bB \|_{L^2}^2.
\]
This functional is invariant by the scaling~\eqref{eq:scaling1}-\eqref{eq:scaling2}, so we can always choose $\| \bB \|_{L^2} = 1$, in which case we recognise the functional that is minimised in the definition of $\beta_c(z,N)$. 

We first prove that $\beta_c(z,N) < 0$ for all $z,N > 0$. From~\cite{LY3}, there exist $\Psi \in L^2(\R^3)$ and $\bA \in H^1_{\div}$ such that $\| \Psi \|_{L^2} = 1$, $\| \bB \|^2 = 1$ and $\Psi \in \Ker( \bsigma \cdot (\bp + \bA))$. In particular, for all $0 < \varepsilon < 1$, the one-body density matrix $\gamma_\varepsilon := \varepsilon | \Psi \rangle \langle \Psi |$ satisfies $0 \le \gamma_\varepsilon \le 1$ and $\Ran \gamma_{\varepsilon} \subset \Ker( \bsigma \cdot (\bp + \bA))$. So $(\gamma_\varepsilon, \bA) \in \cZ$, and
\[
     \cF_z(\gamma_\varepsilon, \bA)  = \frac{\varepsilon^2}{2} D(| \Psi |^2, | \Psi |^2) - z \varepsilon \int \dfrac{| \Psi |^2(\bx)}{| \bx |} \rd \bx.
\]
This term becomes strictly negative for $\varepsilon$ small enough, hence $\beta_c(z,N) < 0$.

\medskip

Since the map $z \mapsto \cF_z$ is pointwise non-increasing, then so is $z \mapsto \beta(z,N)$. In addition, as $\beta(z,N)$ is defined as a minimisation problem of the same function $\cF_z$, but on a set which increases with $N$, the map $N \mapsto \beta(z,N)$ is non-increasing.
Finally, for any zero-mode $(\gamma, \bA) \in \cZ$ with $\| \bB \|_{L^2} = 1$, we have for all $\lambda > 0$ that $\cF_z(\gamma, \bA) = \cF_z^\lambda(\gamma, \bA)$, where we set
\[
    \cF_z^\lambda(\gamma, \bA) := \frac12 D(\rho, \rho)- z \int_{\R^3} \dfrac{\rho(\bx)}{| \bx |} \rd \bx + \lambda \Tr \left( \left[ \bsigma \cdot ( \bp + \bA) \right]^2 \gamma \right),
\]
which is a penalised version of $\cF_z$. We deduce that
\[
    \beta_c(z,N) \ge \beta_c^\lambda(z) := \inf \{  \cF_z^\lambda(\gamma, \bA), \ \gamma \in \cP, \ \bA \in H^1_{\div}, \ \| \bB \|_{L^2} = 1  \}.
\]
This is a minimisation problem without the constraints that $(\gamma, \bA)$ is a zero-mode, nor that $\Tr(\gamma) \le N$. We now bound from below $\beta_c^\lambda(z)$, and optimise the result in~$\lambda$. Using Lemma~\ref{lem:useful_ineq}, and the fact that $\| \bB \|_{L^2} = 1$, we get

\begin{align*}
\cF_z^\lambda(\gamma, \bA) \ge \frac12 D(\rho, \rho) -  z \int_{\R^3} \dfrac{\rho(\bx)}{| \bx |} \rd \bx + \lambda \left( \frac{C_{\rm LT}}{2} \int_{\R^3} \rho^{5/3} + \frac{1}{2}\int_{\R^3} | \nabla \sqrt{\rho} |^2 - \| \rho \|_{L^2} \right).
\end{align*}
The right-hand side depends only on the density $\rho$, so we can optimise it over the set $ R := \left\{ \rho\in L^1(\RR^3),\;\sqrt{\rho}\in H^1(\RR^3) , \ \rho \ge 0 \right\}$. Using the Sobolev embedding $H^1(\R^3) \hookrightarrow L^6(\R^3)$ and the Young's inequality $ab \le \varepsilon a^{8/3} + \frac58 (\frac{3}{8 \varepsilon})^{3/5} b^{8/5}$, we obtain
\begin{align*}
 \norm{\rho}_{L^2}&\leq \norm{\rho}_{L^{5/3}}^{5/8} \norm{\rho}_{L^{3}}^{3/8}
 \leq  C \norm{\rho}_{L^{5/3}}^{5/8} \norm{\nabla\sqrt{\rho}}_{L^2}^{3/4} \\
 & \leq C \varepsilon \norm{\rho}_{L^{5/3}}^{5/3} + C \frac53 \left( \frac{3}{8 \varepsilon} \right)^{3/5} \norm{\nabla\sqrt{\rho}}_{L^2}^{6/5}.
\end{align*}
We choose $\varepsilon$ so that $C \varepsilon = C_{\rm LT}/4$. Since the function $Y \mapsto Y^2 - C Y^{6/5}$ is bounded from below, where $Y := \| \nabla \sqrt{\rho} \|_{L^2}$, we deduce that there is a constant $C$ large enough so that
\[
 \frac{C_{\rm LT}}{2} \| \rho \|_{L^{5/3}}^{5/3} + \frac{1}{2} \| \nabla \sqrt{\rho} \|_{L^2}^2 - \| \rho \|_{L^2} \ge 
 \frac{C_{\rm LT}}{4} \| \rho \|_{L^{5/3}}^{5/3} -C.
\]
Altogether, we proved that
\begin{align*}
    \beta_c (z) := \inf_N \beta(z,N) & \ge \inf_{\rho\in R} \left\{ \frac{\lambda C_{\rm LT}}{4} \int_{\R^3} \rho^{5/3}  + \frac12 D(\rho, \rho) -  z \int_{\R^3} \dfrac{\rho(\bx)}{| \bx |} \rd \bx \right\} - C \lambda .
 \end{align*}
Setting  $\rho_0(\bx) := a^{-1} b^{-3} \rho\bra{ \bx/b}$ for some constants $a,b >0$ that we specify later, we get
\begin{align*}
\beta_c (z) & \ge \inf_{\rho_0\in R} \left\{ \frac{\lambda C_{\rm LT}}{4} a^{5/3} b^2 \int_{\R^3} \rho_0^{5/3}  + \frac{a^2 b}{2} D(\rho_0, \rho_0) -  z a b\int_{\R^3} \dfrac{\rho_0(\bx)}{| \bx |} \rd \bx \right\} - C \lambda .
\end{align*}
We choose $a = z$, $b =z^{-5/6}$, and $\lambda = 4/C_{\rm LT} z^{7/6}$, and get that $$\beta(z,N)  \ge (I^{\rm TF} -         C) z^{7/6},$$ where $I^{\rm TF}$ is the Thomas-Fermi energy
\[
   I^{\rm TF} := \inf_{\rho_0\in L^1\cap L^{5/3},\atop \rho_0\geq 0} \left\{\int_{\R^3} \rho_0^{5/3}  + \frac12 D(\rho_0, \rho_0) -  \int_{\R^3} \dfrac{\rho_0(\bx)}{| \bx |} \rd \bx \right\}.
\]
The fact that $I^{\rm TF}$ is well defined and finite is a result by Lieb and Simon~\cite[Theorem II.3]{Lieb1977} based on Teller's lemma~\cite{Teller}.

Finally, for any $z$ and $N$, the problem defining $\beta(z,N)$ has minimisers. Indeed, any minimising sequence satisfies the conditions of Lemma~\ref{lem:compactness}. By the results of the Lemma, the limit is a zero-mode that minimises $\cF_z$. As $\beta(z,N)<0$,  the limit is a non trivial zero mode, we can choose $\norm{\bB}=1$ by scaling without loss of generality.


\section{Proof of Theorems~\ref{th:existence_minimiser} and \ref{th:existence_minimiser_per}}
\label{sec:proof_existence}

\subsection*{Existence of minimisers}

We now prove that when $N \le Z$, that is when the system is neutral or positively charged, then the finite problem~\eqref{eq:def:Ialpha} have minimisers whenever $\alpha < \alpha_c(z,N)$. In the periodic setting, we always have $N = Z$, and minimisers always exist when $\alpha < \alpha_c(z,N)$. We detail the proof for the molecular case, and later explain the modifications for the periodic case.

Let $\alpha < \alpha_c(z,N)$, and let $(\gamma_n, \bA_n) \in \cP^N \times H^1_{\div}$ be a minimising sequence for $I_\alpha$. We first prove that $\| \bB_n \|_{L^2}$ is uniformly bounded. To do so, we introduce an intermediate $\alpha < \alpha' < \alpha_c(z,N)$, and notice that
\begin{align*}
    \cE_{\alpha}(\gamma_n, \bA_n) & = \cE_{\alpha'}(\gamma_n, \bA_n) + \dfrac{1}{8 \pi} \left(\dfrac{1}{\alpha^2} - \dfrac{1}{(\alpha')^2} \right) \int_{\RR^3} \bB_n^2 \\
    & \ge I(\alpha') + \dfrac{1}{8 \pi} \left(\dfrac{1}{\alpha^2} - \dfrac{1}{(\alpha')^2} \right) \int_{\RR^3} \bB_n^2,
\end{align*}
so the sequence $(\bB_n)$ is bounded in $L^2(\R^3)$. Using estimates similar to the ones of Lemma~\ref{lem:useful_ineq}, together with the fact that, by Hölder's inequality, for $R > 0$ large enough so that $\cB(\bnull, R)$ contains all the nuclei,
\begin{align*}
    \left| \int_{\R^3} V \rho_n \right| & = \int_{\cB(\bnull, R)} V \rho_n + \int_{\cB(\bnull, R)^c} V \rho_n \\
    & \le \| V \1(| \bx | \le R) \|_{L^{5/2}} \| \rho_n \|_{L^{5/3}} + \|  V \1(| \bx | > R) \|_{L^\infty} N,
\end{align*}
we deduce that $\Tr \left( [ \bsigma \cdot (\bp + \bA) ]^2 \gamma_n \right)$ is uniformly bounded. We can therefore apply Lemma~\ref{lem:compactness}, and deduce that, up to a subsequence, there is $(\gamma^*,\bA^*)\in \cP \times H^1_{\div}$ such that
$$
\cE_\alpha(\gamma^*, \bA^*)\leq \liminf \cE_\alpha(\gamma_n, \bA_n)
\quad \text{and} \quad \Tr\bra{\gamma^*}\leq N. 
$$
So $(\gamma^*, \bA^*)$ is a minimiser for $\cE_\alpha$ if and only if $\Tr(\gamma^*) = N$. Unfortunately, we only have that $\Tr(\gamma^*) \le N$ at this point, and electrons may leak to infinity. The existence of a minimiser is a consequence of the following Lemma.
\begin{lemma}
    If $N \le Z$, then $\Tr(\gamma^*) = N$.
\end{lemma}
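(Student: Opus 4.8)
The plan is to rule out any loss of electronic mass at infinity by a binding-type perturbation argument built on the hypothesis $N\le Z$. Write $q:=\Tr(\gamma^*)\le N$ and $\rho^*:=\rho_{\gamma^*}$; the goal is $q=N$. First I would observe that $M\mapsto I(\alpha,M)$ is non-increasing on $(0,\infty)$: to a near-minimiser of $I(\alpha,M)$ one can add $M'-M$ extra electrons ($M'>M$) carried by a smooth, strongly dilated density matrix localised far from all nuclei, whose kinetic, external and Hartree contributions all tend to $0$ while the field is left unchanged. Since $\gamma^*\in\cP^q$ gives $\cE_\alpha(\gamma^*,\bA^*)\ge I(\alpha,q)$, and the compactness step already gave $\cE_\alpha(\gamma^*,\bA^*)\le I(\alpha,N)$, monotonicity forces
\[
    I(\alpha,q)\ \le\ \cE_\alpha(\gamma^*,\bA^*)\ \le\ I(\alpha,N)\ \le\ I(\alpha,q),
\]
hence $\cE_\alpha(\gamma^*,\bA^*)=I(\alpha,q)=I(\alpha,N)$ and $I(\alpha,\cdot)$ is constant on $[q,N]$. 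In particular, restricting to the field $\bA^*$, the state $\gamma^*$ minimises $\gamma\mapsto\cE_\alpha(\gamma,\bA^*)$ over $\cP^q$.

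Next I would use that $\gamma\mapsto\cE_\alpha(\gamma,\bA^*)$ is convex (affine in $\gamma$ apart from the convex Hartree term $\tfrac12 D(\rho_\gamma,\rho_\gamma)$), with Gateaux derivative at $\gamma^*$ given by the mean-field operator
\[
    H^*:=\tfrac12\big[\bsigma\cdot(\bp+\bA^*)\big]^2+V+\rho^**|\cdot|^{-1},
\]
which is self-adjoint and bounded below on $L^2(\R^3,\C^2)$. The first-order minimality condition $\Tr\!\big(H^*(\gamma-\gamma^*)\big)\ge0$ for all $\gamma\in\cP^q$ then identifies $\Tr(H^*\gamma^*)=S_q$, where $S_m:=\inf\{\Tr(H^*\gamma):\gamma\in\cP,\ \Tr(\gamma)=m\}$.

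The crucial step is to show that $H^*$ possesses infinitely many eigenvalues below $\inf\sigma_{\mathrm{ess}}(H^*)=0$. Since $q<N\le Z$, the effective potential $\Phi^*:=V+\rho^**|\cdot|^{-1}$ carries an attractive Coulomb tail: using $\int_{\R^3}\rho^*=q$ and $\rho^*\in L^1\cap L^{5/3}$ (Lemma~\ref{lem:useful_ineq}), one checks $\Phi^*(\bx)\le-(Z-q)/(2|\bx|)$ for $|\bx|$ large. Because $\bA^*\in L^6$ and $\bB^*\in L^2$ are form-subordinate to $-\Delta$ with relative bound $0$ and their contribution on trial functions supported in far annuli is negligible against the Coulomb gain, a classical trial-function argument yields $\sigma_{\mathrm{ess}}(H^*)=[0,\infty)$ together with an infinite sequence of eigenvalues $\mu_1\le\mu_2\le\cdots<0$. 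Hence $S_{q+1}<S_q$, and one may choose $\tilde\gamma\in\cP$ with $\Tr(\tilde\gamma)=q+1$, finite kinetic energy, and $\Tr(H^*\tilde\gamma)<S_q=\Tr(H^*\gamma^*)$.

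Finally, for $\epsilon\in(0,\min\{1,N-q\}]$ I would set $\gamma_\epsilon:=(1-\epsilon)\gamma^*+\epsilon\tilde\gamma\in\cP$, which has $\Tr(\gamma_\epsilon)=q+\epsilon\le N$. Expanding,
\[
    \cE_\alpha(\gamma_\epsilon,\bA^*)=\cE_\alpha(\gamma^*,\bA^*)+\epsilon\,\Tr\!\big(H^*(\tilde\gamma-\gamma^*)\big)+\tfrac{\epsilon^2}{2}D(\rho_{\tilde\gamma}-\rho^*,\rho_{\tilde\gamma}-\rho^*),
\]
so for $\epsilon$ small the negative linear term dominates and $\cE_\alpha(\gamma_\epsilon,\bA^*)<\cE_\alpha(\gamma^*,\bA^*)=I(\alpha,N)$. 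But $\gamma_\epsilon\in\cP^{q+\epsilon}$ with $q+\epsilon\le N$, so $\cE_\alpha(\gamma_\epsilon,\bA^*)\ge I(\alpha,q+\epsilon)\ge I(\alpha,N)$ by monotonicity; this contradiction forces $q=N$, hence $\Tr(\gamma^*)=N$ and $(\gamma^*,\bA^*)$ is a minimiser. I expect the spectral step to be the main obstacle: extracting the Coulomb tail of $\Phi^*$ requires controlling $\rho^**|\cdot|^{-1}$ at infinity from only $L^1\cap L^{5/3}$ information, and one must verify that the self-generated magnetic field cannot destroy the infinitely many negative bound states produced by that tail.
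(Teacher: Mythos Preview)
Your argument is correct and is precisely the strategy the paper invokes: it does not write out a proof but defers to \cite[Theorem II.13]{carlos}, stating only that the key input is that when $\Tr(\gamma^*)<Z$ the mean-field operator $H_{\rho^*,\bA^*}$ has infinitely many negative eigenvalues. Your monotonicity-plus-binding argument, together with the Coulomb-tail trial-state construction for the spectral step, is exactly the mechanism behind that cited result, so you have essentially reconstructed the omitted proof; the one minor simplification available is that instead of a convex combination $(1-\epsilon)\gamma^*+\epsilon\tilde\gamma$ you may add $\epsilon|\psi\rangle\langle\psi|$ for a single negative eigenfunction $\psi$ orthogonal to $\Ran\gamma^*$, which keeps $0\le\gamma_\epsilon\le1$ directly and avoids tracking the Hartree cross-term.
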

The proof of this lemma was done in~\cite[Theorem II.13]{carlos}. It relies on the fact that, if $\Tr(\gamma^*) < Z$, then the corresponding mean-field Hamiltonian $H_{\rho^*, \bA^*}$ defined in~\eqref{eq:Euler-Lagrange} has an infinity of negative eigenvalues.

\begin{remark}
    In the periodic setting, electrons cannot leak away. Indeed, from the boundedness of $\sqrt{\rho_n}$ in $L^2_\per$, we deduce that $\rho_n$ converges strongly to $\rho^*$ in $L^p_\per$ for all $1 \le p < 3$. In particular, 
    \[
    \VTr_\Lat(\gamma^*)= \int_{\WS} \rho^* = \lim_{n \to \infty} \int_{\WS} \rho_n = \lim_{n \to \infty} \VTr_\Lat(\gamma_n) = N.
    \]
\end{remark}

\subsection*{The Euler-Lagrange equations} 
Finally, we derive the Euler-Lagrange equations. Let $(\gamma^*, \bA^*) \in \cP^N \times H^1_{\div}$ be a minimiser (with $N$ not necessarily smaller than $Z$) of $\cE_\alpha$. Then for all $\gamma \in \cP^N$, we have
\[
   \forall 0 \le t \le 1, \quad  \cE_\alpha \left( (1 - t) \gamma^* + t \gamma \right) \ge \cE_\alpha (\gamma^*).
\]
Hence the derivative at $t = 0^+$ must be positive, that is $ \Tr \left( H_{\rho^*, \bA^*}(\gamma - \gamma^*) \right) \ge 0$, with
\[
   H_{\rho^*, \bA^*} := \frac12 \left[  \bsigma \cdot (\bp + \bA^*) \right]^2 + V  + \rho^* * | \cdot |^{-1}.
\]
Since this is valid for all $\gamma \in \cP^N$, we get that $\gamma^*$ is also the minimiser of the linearised problem
\begin{equation} \label{eq:linearised_minimisation}
    \gamma^* \in {\rm argmin} \left\{ \Tr \left( H_{\rho^*, \bA^*} \gamma \right), \ \gamma \in \cP^N   \right\}.
\end{equation}
It is proved in~\cite[Appendix A]{erdos1995magnetic} that the form domain of $H_{\rho^*, \bA^*}$ is $H^1(\R^3, \C^2)$, and that its essential spectrum is $[0, \infty)$. In particular, $\gamma^*$ is of the form 
\[
    \gamma^* = \sum_{k \ge 0} n_k | \phi_k \rangle \langle \phi_k |, \quad \text{where} \quad  0 \le n_k \le 1 \quad \text{and} \quad \sum_{k \ge 0} n_k = N,
\] 
and where the functions $\phi_k \in H^1(\R^3, \C^2)$ are orthonormal in $L^2(\R^3, \C^2)$, and satisfy $H_{\rho^*, \bA^*} \phi_k = - \lambda_k \phi_k$ with $\lambda_k \ge 0$. So $ - \left( - \frac12 \Delta + \lambda_k \right) \phi_k = f$, with
\[
    f := \bA^* \cdot (- \ri \nabla \phi_k) + \left( \frac12 | \bA^* |^2 + \frac12 \sigma \cdot \bB^* + V + \rho^**| \cdot |^{-1} \right) \phi_k.
\]
We now use classical elliptic arguments to prove that $\phi_k \in H^2(\R^3, \C^2)$. In the sequel, we write $L^p$ for $L^p(\R^3, \C^2)$ for shortness. Let us first prove that $f \in L^{3/2}$.
Since $\bA^* \in H^1_{\div}$, while $\phi_k \in H^1$, we have $\bA \cdot (- \ri \nabla \phi_k) \in L^{3/2}$, $ | \bA^* |^2 \phi_k \in L^{6/5} \cap L^2$ and $\sigma \cdot \bB^* \phi_k \in L^1 \cap L^{3/2}$. Also, since $V \in L^{3 - \varepsilon} + L^{3+ \varepsilon}$, we deduce that $V \phi_k \in L^{6/5 + \varepsilon} \cap L^{2 - \varepsilon}$. Finally, since $\rho^* \in L^1 \cap L^3$, we have $\rho^* *| \cdot |^{-1} \in L^{3 + \varepsilon} \cap L^{\infty}$, and therefore  $\rho^* *| \cdot |^{-1} \phi_k \in L^{6/5} \cap L^6$. Altogether, we obtain $f \in L^{3/2}$.
We write 
$$(- \Delta + 1) \phi_k = f_{3/2} + (1 - \lambda_k) \phi_k \in L^{3/2} + L^2.$$ 
We take the convolution with the Yukawa kernel 
$$Y(x) := \frac{\re^{ - \sqrt{1 + \lambda_k} \av{x}}}{\av{x}},$$ 
which belongs to $L^1 \cap L^{3}_w$, and deduce from Young's inequality that $\phi_k \in L^2 \cap L^\infty$ (this already proves that the density is bounded). Also, since $\nabla Y \in L^1 \cap L^{3/2}_w$, we have $\nabla \phi_k \in L^{3/2} \cap L^3+ L^2\cap L^6\subset L^2\cap L^3$. 

This allows to improve regularity for $f$. We have $\bA^* (- \ri \nabla \phi_k)  \in L^2$, $\sigma \cdot \bB^* \phi_k \in L^2$, and $V \phi_k \in L^2$, so $f \in L^2$. This gives $(- \Delta + 1) \phi_k \in L^2$, and finally $\phi_k \in H^2(\R^3)$ by usual elliptic regularity.

\medskip

We now focus on the Euler-Lagrange equation for $\bA^*$. Using the fact that the condition $\div(\bA) = 0$ is equivalent to $\| \div(\bA) \|_{L^2}^2 = 0$, we obtain that $\bA^* \in H^1_{\div}$ satisfies
\begin{equation}\label{eq:EulerLagrange_A2}
 \frac12 ( \bj^* + \curl \bm^* ) + \bA^* \rho + \frac{1}{4 \pi \alpha} (- \Delta \bA^*) = 0.
\end{equation}
Now, since $\phi_k \in H^2(\R^3, \C^2)$, we have $\bj^* := 2 \Im \sum_{k \ge 0} {\rm Tr}_{\C^2} \bra{\overline{\phi_k} \nabla \phi_k} \in L^6(\R^3, \R^3)$ and similarly $\curl \bm^* \in L^6(\R^3, \R^3)$. Together with the fact that $\bA^* \rho \in L^6(\R^3, \R^3)$, we deduce that $-\Delta \bA^* \in L^6(\R^3, \R^3)$. Hence $( - \Delta + 1) \bA^* \in L^6(\R^3, \R^3)$, and we deduce that (see for instance~\cite{gilbarg2015elliptic}) $\bA^* \in W^{2, 6}(\R^3, \R^3)$.

\medskip

The computations are similar in the periodic case.

\printbibliography 	

\end{document}